\documentclass[10pt]{article}
\usepackage{amsmath,amssymb,amsthm,mathrsfs}
\usepackage{graphicx}
\usepackage{booktabs,multirow}
\usepackage{enumitem}
\usepackage[authoryear,longnamesfirst]{natbib}
\usepackage{xcolor}
\IfFileExists{algorithm.sty}{%
    \usepackage{algorithm}
    \usepackage{algpseudocode}
}{%
    \usepackage{float}
    \newfloat{algorithm}{tbp}{loa}
    \floatname{algorithm}{Algorithm}
    \newcommand{\algorithmicrequire}{\textbf{Require:}}
    \newcommand{\algorithmicensure}{\textbf{Ensure:}}
    \newenvironment{algorithmic}[1][]
      {\begin{enumerate}[label=\arabic*.,leftmargin=*,nosep]}
      {\end{enumerate}}
    \newcommand{\Require}{\item[]\algorithmicrequire\ }
    \newcommand{\Ensure}{\item[]\algorithmicensure\ }
    \newcommand{\State}{\item}
    \newcommand{\For}[1]{%
      \item \textbf{for }##1\textbf{ do}%
      \begin{enumerate}[label={},leftmargin=1.5em,nosep]}
    \newcommand{\EndFor}{\end{enumerate}}
}
\usepackage[colorlinks=true,allcolors=blue]{hyperref}

\newtheorem{theorem}{Theorem}
\newtheorem{lemma}[theorem]{Lemma}
\newtheorem{proposition}[theorem]{Proposition}

\newtheorem{definition}[theorem]{Definition}
\newtheorem{assumption}[theorem]{Assumption}
\newtheorem{remark}[theorem]{Remark}

\newcommand{\ConvP}{\xrightarrow{\mathbb P}}

\title{Rate-Based Causal Discovery for Heavy-Tailed Models}
\author{
Vishal Routh\\
v.routh@uga.edu\\
University of Georgia
}
\date{\today}

\begin{document}
\maketitle
\begin{abstract}
In many complex systems, such as river networks and financial markets,
extreme events propagate from causes to their effects, and the largest
observations carry the most information about the underlying causal
structure. Existing methods for causal discovery in extremes, however, are
tailored to specific mechanisms, such as linear or max-linear models. This
work develops causal discovery for heavy-tailed homogeneous structural
causal models, a flexible class that contains linear, max-linear and
norm-type mechanisms, which may differ from node to node. In these models,
the causal structure is encoded by ancestral impulse responses, which
measure how a shock at one node is transmitted to its descendants. We
propose the extremal minimum ratio estimator of these responses, which uses
only observations in which the potential cause is extreme. The estimator is
consistent under regular variation alone, and we derive its exact rates of
convergence and limiting distributions, which depend on both the causal
relation and the type of mechanism. Building on these rates, we construct
consistent estimators of the causal order and of the full set of ancestral
relations, using a threshold that adapts to the data and requires no
knowledge of the tail index. The method is illustrated on simulated data and
on river discharges in the upper Danube basin.
\end{abstract}
\noindent
\textbf{Keywords:} causal discovery, heavy tails, structural causal models,
extremal minimum ratio estimator, single-big-jump principle.

\section{Introduction}
\label{sec:introduction}

Causal discovery---the process of learning the underlying causal structure
from purely observational data---remains a fundamental challenge in modern
statistics and machine learning \citep{spirtes2000causation,pearl2009causality}.
Traditionally, structural causal models (SCMs) have often relied heavily on
assumptions of finite-variance noise; see, for example,
\citet{chaudhuri2025consistent} and the references therein. However, empirical
data across various complex systems frequently exhibit heavy-tailed behavior,
where extreme events play a disproportionate and highly informative role in
the system's dynamics. Such considerations arise in fields including finance
\citep{chuang2009causality}, Earth and environmental sciences
\citep{sun2021causal,mhalla2020causal}, public health
\citep{chernozhukov2011inference,zhang2012causal}, genetics
\citep{duncan2011genome}, and neuroscience \citep{zanin2016causality}, among
others.

When the noise variables driving an SCM are heavy-tailed, the asymmetry in how
extreme values propagate through the graph presents a unique opportunity.
Recent literature has demonstrated that extreme value theory, and in
particular the framework of regular variation, under which tail probabilities
decay like a power $x^{-\alpha}$ up to a slowly varying factor
\citep{resnick1987extreme,resnick2007heavy,de2006extreme,kulik2020heavy}, can
be leveraged to resolve causal directions. Existing works have successfully
explored specific instances of this phenomenon, such as heavy-tailed linear
SCMs \citep{gnecco2021causal,pasche2023causal,zhou2024efficient,krali2026causal}
and recursive max-linear models, in which each variable is the maximum,
rather than the sum, of its weighted parents and its noise
\citep{gissibl2018max,gissibl2018tail,gissibl2021identifiability,kluppelberg2021estimating,buck2021recursive,amendola2022conditional,tran2024estimating,adams2025inference},
as well as constraint-based procedures built on extremal notions of separation
\citep{jiang2025separation}; see also \citet{engelke2020graphical} for
graphical models for extremes and \citet{chavez2024causality} for a review of
causality and extremes. However, there remains a critical need for a
generalized framework capable of accommodating diverse, nonlinear functional
relationships. We also mention the recent works of \citet{fang2025structural}
and \citet{engelke2025extremes}, which formulate SCMs at the level of the
extremal limit.

In this paper, we develop a rate-based approach to causal discovery in
heavy-tailed homogeneous structural causal models (HT-HSCMs). Let
$\mathcal D=(V,E)$ be a directed acyclic graph (DAG) with node set
$V=\{1,\ldots,d\}$, and for $i\in V$ let $pa(i)$ and $an(i)$ denote the sets
of parents and of ancestors of $i$, and $An(i):=an(i)\cup\{i\}$. An HT-HSCM is
a random vector $\boldsymbol X=(X_1,\ldots,X_d)$ generated by the structural
assignments $X_i=f_i(\boldsymbol X_{pa(i)},\varepsilon_i)$, $i\in V$. The
innovations (noise variables) $\varepsilon_1,\ldots,\varepsilon_d$ are i.i.d.,
nonnegative and regularly varying with a common tail index $\alpha>0$, and
each structural function $f_i$ is continuous, coordinatewise nondecreasing and
positively homogeneous of degree one, that is, $f_i(c\boldsymbol x)=
cf_i(\boldsymbol x)$ for every $c>0$. The class was introduced by
\citet{routh2026heavy}, whose formulation required only a coordinate-deletion
property: setting some arguments of $f_i$ to zero cannot increase its value.
We strengthen this to coordinatewise monotonicity, which, unlike coordinate
deletion, is preserved under composition, and we normalize the effect of each
innovation on its own node by requiring $f_i(0,\ldots,0,1)=1$; see
Definition~\ref{Def:HT-HSCM} and Remark~\ref{rem:deletion-monotonicity}.
The sum-linear and max-linear SCMs,
$X_i=\sum_{h\in pa(i)}c_{hi}X_h+\varepsilon_i$ and
$X_i=\bigvee_{h\in pa(i)}c_{hi}X_h\vee\varepsilon_i$ with edge weights
$c_{hi}>0$, are special cases. The class also contains nonlinear mechanisms
such as $\ell_p$-aggregation,
$X_i=\|((c_{hi}X_h)_{h\in pa(i)},\varepsilon_i)\|_p$ with $0<p<\infty$,
which is the sum-linear model for $p=1$ and approaches the max-linear model as
$p\to\infty$. Moreover, the mechanism may differ from node to node.

Homogeneity makes the extremes of an HT-HSCM transparent. Substituting the
structural assignments recursively expresses each $X_i$ as a function of the
innovations $(\varepsilon_h)_{h\in An(i)}$. The \emph{ancestral impulse
response} (AIR) $F_{hi}$ of node~$i$ to node~$h$ is the value of $X_i$ when
$\varepsilon_h=1$ and all other innovations vanish. It satisfies $F_{hi}>0$ if
and only if $h\in An(i)$, and $F_{ii}=1$. By the single-big-jump principle, an
extreme observation of $\boldsymbol X$ is typically caused by one large
innovation $\varepsilon_h$, and homogeneity implies that the model transmits
it to the nodes in the fixed direction given by the $h$th row
$\boldsymbol F_{h\cdot}=(F_{h1},\ldots,F_{hd})$ of the AIR matrix
$F=(F_{hi})_{h,i\in V}$. Since the support of $F$ is the ancestral relation,
$F$ encodes the causal structure without reference to the form of the
structural functions.

\citet{routh2026heavy} studied HT-HSCMs at the population level. Let $G_i$
denote the distribution function of $X_i$. They showed that the causal tail
coefficients $\Gamma_{ij}:=\lim_{u\uparrow1}\mathbb E[G_j(X_j)\mid G_i(X_i)>u]$
of \citet{gnecco2021causal} identify the ancestral partial order of the DAG,
that is, the relation ``$i$ is an ancestor of $j$'', and that the standardized
AIRs $F_{hi}/(\sum_{k\in An(i)}F_{ki}^{\alpha})^{1/\alpha}$ can be recovered
from these coefficients by a recursive algorithm.
Estimation from data, rates of convergence and consistent structure learning
were left open. The present paper addresses these questions.

It is worth making precise what can be learned. The HT-HSCM class contains the
recursive max-linear models. Define the weight of a directed path as the
product of the edge weights $c_{uv}$ along it. In a max-linear model, an edge
$h\to i$ whose weight $c_{hi}$ is smaller than the weight of another directed
path from $h$ to $i$ can be removed without changing the distribution of
$\boldsymbol X$ \citep{gissibl2018max}. The edge set of the DAG is therefore
not identifiable in general. We target instead a causal order, that is, a
permutation $\pi$ of $V$ with $\pi(u)<\pi(v)$ whenever $u\in an(v)$, and the
reachability matrix
$R=(\mathbf 1_{\{i\in An(j)\}})_{i,j\in V}$. By the support property of the
AIRs just described, $R$ coincides with the support of $F$. Moreover, $R$
determines the transitive reduction of the DAG, which is the unique DAG with
the fewest edges having the same ancestral relations
\citep{aho1972transitive}.

Our approach rests on a deterministic inequality. For distinct nodes $i$ and
$j$, we call $(i,j)$ an \emph{ancestral pair} if $i\in an(j)$ and a
\emph{non-ancestral pair} otherwise. For all nodes $i$ and $j$,
$X_j\geq F_{ij}X_i$ almost surely (Lemma~\ref{lem:ratio-lower-bound}), so
$F_{ij}$ is a lower bound for the ratio $X_j/X_i$; for a non-ancestral pair,
$F_{ij}=0$. The bound is sharp in the extremes. Conditionally on $X_i>x$, the
law of $X_j/X_i$ converges as $x\to\infty$ to a discrete law with atoms
$F_{hj}/F_{hi}$, $h\in An(i)$, each carrying positive mass, and the
multiplicative transitivity $F_{hj}\geq F_{hi}F_{ij}$ of the AIRs
(Lemma~\ref{lem:AIR-transitivity}) shows that the smallest atom is $F_{ij}$,
attained at $h=i$. This motivates the \emph{extremal minimum ratio estimator}
(EMRE) $\widehat F_{ij}$. Given $n$ i.i.d.\ observations of $\boldsymbol X$
and an intermediate sequence $k_n$, that is, $k_n\to\infty$ and $k_n/n\to0$,
the EMRE is the smallest ratio $X_j/X_i$ among the $k_n$ observations with the
largest values of $X_i$. For recursive max-linear models,
\citet{gissibl2021identifiability} estimate the max-linear coefficients, the
max-linear analogue of the AIRs, by minima of observed ratios of components,
exploiting the fact that in these models the ratio $X_j/X_i$ of an ancestral
pair attains its lower bound with positive probability.
\citet{buck2021recursive} add multiplicative observational noise, bounded
below by one, to these models, so that the ratio no longer attains its lower
bound. When the logarithm of the noise is regularly varying at zero, the
centered and rescaled minima of the ratios over the whole sample converge to
independent Weibull limits, whose shape parameter is proportional to the
length of the max-weighted path between the two nodes. There, the rate is
governed by the behavior of the noise near its lower endpoint. An HT-HSCM has
no such noise, and outside the max-linear case its lower bound need not be
attained: in an $\ell_p$-aggregation model with $0<p<\infty$, for instance,
$X_j/X_i>F_{ij}$ almost surely whenever $i\in an(j)$. The bound is approached
in the extremes instead. Restricting the minimum to extreme observations is
therefore what makes the minimum ratio informative beyond the max-linear case,
and the resulting rate is governed by the upper tail of $X_i$ and by the
aggregation mechanism rather than by the length of a path.

The main contributions of the paper are as follows.

First, we show that the EMRE is consistent for $F_{ij}$, for every pair of
distinct nodes in every HT-HSCM, under regular variation alone.

Second, and this is our main theoretical result, we derive the exact rate of
convergence of the EMRE together with its limiting distribution. This requires
two conditions near zero on the innovations: a lower-tail condition
$\mathbb P(\varepsilon_1\leq t)=o(t^{\alpha/(1-\delta)})$ as $t\downarrow0$,
and a negative-moment condition $\mathbb E[\varepsilon_1^{-(\alpha+\delta')}]
<\infty$, for some $\delta,\delta'\in(0,1)$. The first implies the second,
and both hold, for example, for Pareto and Fr\'echet innovations. For $t>1$,
let $U_i(t)$ denote the quantile of order $1-1/t$ of $X_i$, and put
$a_n:=U_i(n)$. Three regimes emerge. If $i\notin An(j)$, then
$a_n\widehat F_{ij}$ converges in distribution to a Weibull law with shape
parameter $\alpha$ and an explicit scale. The scale is determined by the
negative moment $\mathbb E[X_j^{-\alpha}]$ and by the fraction
$\sum_{h\in An(i)\setminus An(j)}F_{hi}^{\alpha}/\sum_{h\in An(i)}F_{hi}^{\alpha}$
of the tail of $X_i$ that is due to ancestors of $i$ that are not ancestors
of $j$. This result holds in every HT-HSCM, including models whose mechanisms
differ across nodes. If $i\in an(j)$ in an $\ell_p$-aggregation model with
$0<p<\infty$, then $a_n^p(\widehat F_{ij}-F_{ij})$ converges in distribution
to a Weibull law with shape parameter $\alpha/p$, again with an explicit
scale. In the max-linear model, $\widehat F_{ij}=F_{ij}$ with probability
tending to one. If $X_i$ has a Pareto-type tail,
$\mathbb P(X_i>x)\sim cx^{-\alpha}$ for some $c>0$, then
$a_n\sim(cn)^{1/\alpha}$, and the three regimes correspond to the rates
$n^{-1/\alpha}$ and $n^{-p/\alpha}$ and to exact recovery. The two rates have
transparent sources. A non-ancestral ratio is small essentially only when
$X_i$ is driven by an innovation that does not reach $j$, while $X_j$ remains
moderate. For an ancestral pair in an $\ell_p$-aggregation model,
$X_j^p-F_{ij}^pX_i^p$ is a nonnegative random variable that does not involve
the innovations $\varepsilon_h$ with $F_{hj}=F_{ij}F_{hi}$, which are exactly
those whose extremes produce ratios $X_j/X_i$ close to $F_{ij}$; hence the
excess $X_j/X_i-F_{ij}$ decays like $X_i^{-p}$. The rate is therefore governed
jointly by the causal relation and by the aggregation mechanism: the ancestral
error decays more slowly than the non-ancestral estimate when $p<1$ and faster
when $p>1$, and the max-linear model is the limiting case of exact recovery.
None of these limit laws depends on the choice of $k_n$.

Third, we use the EMRE for structure learning. Supplying the estimated AIRs to
the extremal ancestral search (EASE) algorithm of \citet{gnecco2021causal},
originally proposed with the causal tail coefficient for heavy-tailed linear
SCMs, yields a consistent estimator of a causal order for the entire HT-HSCM
class, again under regular variation alone. EASE builds the order greedily: at
each step it selects, among the remaining nodes, the one whose largest
estimated AIR from the other remaining nodes is smallest. Given this order, we
estimate the reachability matrix by testing only pairs that point forward in
the order and declaring $i\in an(j)$ when
$\widehat F_{ij}>X_{i,(n-k_n)}^{-1}$, where $X_{i,(n-k_n)}$, the
$(k_n+1)$st largest of the $n$ observations of $X_i$, is an intermediate
order statistic. This threshold is \emph{rate-based}. It tends to zero, so
ancestral pairs, whose EMRE converges to a positive AIR, are retained; but it
is of order $U_i(n/k_n)^{-1}$, which exceeds the order $U_i(n)^{-1}$ of a
non-ancestral EMRE by a factor that tends to infinity, roughly like
$k_n^{1/\alpha}$, so non-ancestral pairs are eventually rejected. We prove that
the resulting estimator recovers $R$ with probability tending to one. It
requires no knowledge of the tail index or of the structural functions, which
may differ across nodes, and the EMRE and the threshold may be computed with
different intermediate sequences.

Fourth, a simulation study with DAGs of up to $30$ nodes, in which every node
independently follows a sum-linear, Euclidean-norm ($\ell_2$-aggregation) or
max-linear mechanism, shows that the procedure performs well in finite
samples. At the reported tuning parameters, the power (the proportion of
ancestral pairs detected) is at least $0.96$, the normalized structural Hamming
distance (the proportion of misclassified off-diagonal entries of $R$) is at
most $0.031$, and the ancestral violation rate (the proportion of ancestral
pairs placed in the wrong order by the estimated causal order) does not exceed
$10^{-4}$. The study also explains how the two tuning parameters, namely the
intermediate sequences used in the EMRE and in the threshold, should be chosen
and why recovery is harder for $\alpha=2$ than for $\alpha=1$.

Fifth, we apply the procedure to river discharges at twelve gauging stations
of the upper Danube basin, whose flow network provides a known causal
structure, and study how the results depend on the two intermediate
sequences. After declustering the daily discharges into $430$ summer flood
events and standardizing the margins, EASE recovers a causal order without
ancestral violations whenever the EMRE uses between $8$ and $28$ events. Over
a band of threshold exponents, the reachability estimator attains a normalized
structural Hamming distance between $0.098$ and $0.114$, compared with $0.136$
for the empty graph. Its remaining errors have a physical explanation: false
ancestral relations arise between adjacent catchments that are hit by the same
storms, a common cause that lies outside the model, and missed relations
involve small impulse responses into the most downstream station.

Our results complement the existing literature in several ways. The causal
tail coefficient is defined through the marginal distribution functions, and
\citet{routh2026heavy} recover standardized AIRs from it through a
population-level recursion. The EMRE instead estimates each AIR directly, on
the scale of the data, and comes with exact rates and limit laws. In the
max-linear case, it recovers the AIR exactly with probability tending to one,
mirroring the exact identification of max-linear coefficients in
\citet{gissibl2021identifiability}, while for other mechanisms it remains
consistent, at a rate dictated by the mechanism. Our approach to structure
learning also differs from that of \citet{buck2021recursive}, who estimate a
topological order by sorting minimum ratios and set small estimated
coefficients to zero with a user-chosen threshold. Their limit theory concerns
pairs with a positive coefficient. In HT-HSCMs, we also derive the rate for
non-ancestral pairs, and it is this rate that justifies a threshold that tends
to zero. Finally, unlike the formulations of \citet{fang2025structural} and
\citet{engelke2025extremes}, which are set at the level of the extremal limit,
the structural equations of an HT-HSCM hold for the observations themselves,
and our estimators act directly on ratios of observed variables.

The rest of the paper is organized as follows. Section~\ref{sec:prelim}
collects the graph-theoretic and regular-variation preliminaries, defines
HT-HSCMs and the AIR matrix, derives the joint tail of an HT-HSCM, and presents
the sum-linear, max-linear and $\ell_p$-aggregation examples together with the
basic properties of the AIRs. Section~\ref{sec:estimator} introduces the EMRE
and establishes its consistency, rates of convergence and limiting
distributions. Section~\ref{sec:structure} applies the EMRE to the estimation
of a causal order and of the reachability matrix. Section~\ref{sec:simulation}
reports the simulation study, and Section~\ref{sec:danube-sensitivity} applies
the procedure to the upper Danube data. The supplementary material contains
additional numerical results.

\section{Preliminaries}
\label{sec:prelim}

\subsection{Graph notions}
Let $\mathcal{D}=(V,E)$ be a directed acyclic graph (DAG) with node set
$V=\{1,\ldots,d\}$ and edge set $E\subseteq V\times V$.
Node $j$ is a parent of node $i$ if $(j,i)\in E$, and we write
\[
pa(i)=\{j\in V:(j,i)\in E\}, \qquad j\to i \iff (j,i)\in E.
\]
For an SCM defined on a DAG, each node in \(V\) corresponds to a random
variable, and a directed edge indicates that the variable at the child node
functionally depends on the variable at the parent node.

By a directed path from $j$ to $i$, we mean a sequence
$(k_0,\ldots,k_s)$ with $s\geq1$, $k_0=j$, $k_s=i$, and
$k_\ell\to k_{\ell+1}$ for $\ell=0,\ldots,s-1$. If such a path exists,
we write $j\leadsto i$. We then call $j$ an ancestor of $i$ and $i$ a
descendant of $j$. Let $an(i)$ and $de(i)$ denote the strict ancestors and
strict descendants of $i$, respectively, and set
\[
An(i)=an(i)\cup\{i\}, \qquad De(i)=de(i)\cup\{i\}.
\]

While the graph-theoretic notation above provides the combinatorial structure
of our causal models, the statistical identifiability of these structures
relies on the specific tail behavior of the associated random variables. In
particular, to resolve causal directions from observational data, we must
characterize how extreme values propagate through the network. This
necessitates a formal treatment of regular variation, which provides the
mathematical language to describe the heavy-tailed noise distributions that
drive our proposed HT-HSCM. We detail these probabilistic preliminaries in the
following subsection.

\subsection{Regularly varying functions and random variables}
Throughout, for positive functions $f$ and $g$, we write $f\sim g$ if
$f(x)/g(x)\to1$ as $x\to\infty$.

A positive measurable function $f$, defined on $[x_0,\infty)$ for some
$x_0>0$, is \emph{regularly varying} (at infinity) with index
$\rho\in\mathbb R$, written $f\in RV_\rho$, if
\begin{equation}\label{eq:RV-function}
    \lim_{x\to\infty}\frac{f(cx)}{f(x)}=c^{\rho}
    \qquad\text{for every }c>0 .
\end{equation}
If $\rho=0$, then $f$ is called \emph{slowly varying}. Every
$f\in RV_\rho$ can be written as $f(x)=x^{\rho}\ell(x)$ with
$\ell\in RV_0$.

For a random variable, regular variation refers to the upper tail. A
nonnegative random variable $X$ with distribution function $G_X$ is
\emph{regularly varying with tail index $\alpha>0$}, written
$X\in RV_+(\alpha)$, if its survival function $1-G_X$ belongs to
$RV_{-\alpha}$, that is,
\[
    \mathbb P(X>x)\sim\ell(x)x^{-\alpha},\qquad x\to\infty,
\]
for some $\ell\in RV_0$. Classical examples are the Pareto and Fr\'echet
distributions. The absolute values of Student's $t$ and Cauchy random
variables are further examples, with tail index equal to the number of
degrees of freedom and to $1$, respectively.

The tail of $X$ can equivalently be described through its quantiles. Let
$G_X^{\leftarrow}(y):=\inf\{x\in\mathbb R:G_X(x)\geq y\}$ denote the quantile
function of $X$, and define the \emph{upper quantile function}
\begin{equation}\label{eq:upper-quantile}
    U_X(t):=G_X^{\leftarrow}\Bigl(1-\frac1t\Bigr),\qquad t>1,
\end{equation}
so that $U_X(t)$ is the quantile of order $1-1/t$. Then
$X\in RV_+(\alpha)$ if and only if $U_X\in RV_{1/\alpha}$, and in that case
the sequence $c_n:=U_X(n)$ satisfies
\begin{equation}\label{eq:RV-sequence}
    \lim_{n\to\infty}n\,\mathbb P(X>c_nt)=t^{-\alpha}
    \qquad\text{for every }t>0
\end{equation}
\citep[Section~1.3]{kulik2020heavy}. The normalizing sequence matters only
up to a constant factor. If $c_n'\sim Cc_n$ for some $C>0$, then
\eqref{eq:RV-sequence} and the monotonicity of $t\mapsto\mathbb P(X>t)$ give
\[
    \lim_{n\to\infty}n\,\mathbb P(X>c_n't)=(Ct)^{-\alpha}
    \qquad\text{for every }t>0 .
\]
The quantile functions $U_i:=U_{X_i}$ of the nodes and the sequence
$a_n=U_1(n)$ used in Sections~\ref{sec:estimator} and~\ref{sec:structure}
are of this form.

Ratios such as $U_i(n)/U_i(n/k_n)$ or $\mathbb P(X>tx)/\mathbb P(X>t)$ must
often be controlled uniformly in $x$, not only in the limit
\eqref{eq:RV-function}. Potter's bounds provide this control.

\begin{lemma}[Potter's bounds]\label{Potter's bounds}
Let $f\in RV_\rho$ for some $\rho\in\mathbb R$. For every $\epsilon>0$
there exists $t_0>0$ such that
\[
    (1-\epsilon)\,x^{\rho-\epsilon}
    \leq\frac{f(tx)}{f(t)}
    \leq(1+\epsilon)\,x^{\rho+\epsilon}
    \qquad\text{for all }x\geq1\text{ and }t\geq t_0 .
\]
\end{lemma}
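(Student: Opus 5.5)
The plan is to reduce to the slowly varying case and then invoke the uniform convergence theorem for slowly varying functions, in the form of the Karamata representation. Write $f(x)=x^{\rho}\ell(x)$ with $\ell\in RV_0$, as recalled after \eqref{eq:RV-function}. Then
\[
\frac{f(tx)}{f(t)}=x^{\rho}\,\frac{\ell(tx)}{\ell(t)},
\]
so it suffices to show that for every $\epsilon>0$ there is $t_0$ with
\[
(1-\epsilon)x^{-\epsilon}\leq \frac{\ell(tx)}{\ell(t)}\leq (1+\epsilon)x^{\epsilon}
\]
for all $x\geq1$ and $t\geq t_0$. The statement is the classical one of \citet{kulik2020heavy} (see also Bingham--Goldie--Teugels), so the proof could simply cite it. The plan below outlines the self-contained route.

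The main step is the Karamata representation theorem. It gives measurable functions $c(\cdot)\to c_0\in(0,\infty)$ and $\eta(\cdot)\to0$ with
\[
\ell(x)=c(x)\exp\Bigl(\int_{x_0}^{x}\eta(u)\,u^{-1}du\Bigr)
\]
for $x\geq x_0$. This is the genuine obstacle. It rests on the uniform convergence theorem: $\ell(tx)/\ell(t)\to1$ uniformly for $x$ in compact subsets of $(0,\infty)$. That theorem is proved from measurability by the Egorov-type argument, which shows that the set of $\log x$ where convergence fails to be uniform has small measure. Combining this with the additive structure of $\log x$ yields uniformity on $[1,2]$, say. The representation then follows by writing $h(y)=\log\ell(e^y)$, so that $h(y+s)-h(y)\to0$ uniformly for $s\in[0,1]$. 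One sets $\eta$ equal to a smoothed increment, $\eta(e^y)=h(y+1)-h(y)$ up to an integral correction, and absorbs the remainder into $c$. I would cite this step rather than reprove it.

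Given the representation, the bound is routine. For $t\geq t_0\geq x_0$ and $x\geq1$,
\[
\frac{\ell(tx)}{\ell(t)}=\frac{c(tx)}{c(t)}\exp\Bigl(\int_t^{tx}\eta(u)u^{-1}du\Bigr).
\]
Choose $t_0$ so large that $|\eta(u)|\leq\epsilon$ and $c(u)/c_0\in[(1-\epsilon)^{1/2},(1+\epsilon)^{1/2}]$ for all $u\geq t_0$. Then the integral lies in $[-\epsilon\log x,\epsilon\log x]$, so the exponential lies in $[x^{-\epsilon},x^{\epsilon}]$. The ratio $c(tx)/c(t)$ lies in $[1-\epsilon,1+\epsilon]$ after adjusting $\epsilon$, since $(1-\epsilon)^{1/2}/(1+\epsilon)^{1/2}\geq 1-\epsilon$. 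Multiplying by $x^{\rho}$ gives the claim.
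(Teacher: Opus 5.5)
Your proposal is correct and follows essentially the paper's route: the paper just cites \citet[Proposition~0.8(ii)]{resnick1987extreme}, and the proof there is this same Karamata-representation argument. One small fix is needed: your stated inequality justifies only the lower side of $c(tx)/c(t)\in[1-\epsilon,1+\epsilon]$, because $\sqrt{(1+\epsilon)/(1-\epsilon)}>1+\epsilon$; instead choose $c(u)/c_0\in[1-\delta,1+\delta]$ for $u\geq t_0$ with $(1+\delta)/(1-\delta)\leq1+\epsilon$, which is possible since $c(u)\to c_0$.
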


\begin{proof}
See \citet[Proposition~0.8(ii)]{resnick1987extreme}.
\end{proof}

\subsection{Heavy-tailed homogeneous structural causal models}\label{defhscm}
We now present the \emph{heavy-tailed homogeneous structural causal models}
(HT-HSCMs). They contain the recursive max-linear models and the sum-linear
models with positive coefficients and nonnegative innovations considered in
the literature.

\begin{definition}\label{Def:HT-HSCM}
An HT-HSCM $\boldsymbol{X}=(X_1,\ldots,X_d)$ on a DAG
$\mathcal{D}=(V,E)$ is specified by the structural assignments
\begin{equation}\label{eq:local-hscm}
    X_i=f_i(\boldsymbol{X}_{pa(i)},\varepsilon_i),
    \qquad i=1,\ldots,d,
\end{equation}
where a fixed ordering is used for each parent vector and the following
conditions hold:
\begin{itemize}
    \item $\varepsilon_1,\ldots,\varepsilon_d$ are i.i.d.\ nonnegative
    regularly varying random variables with tail index $\alpha>0$, namely,
    \begin{equation}\label{eq:RV tail}
          \mathbb{P}(\varepsilon_i>x)\sim \ell(x)x^{-\alpha},
        \quad x \to \infty,
    \end{equation}
    for some slowly varying function $\ell$.
    \item For every $i\in V$, the structural function
    $f_i:\mathbb{R}_+^{|pa(i)|+1}\to\mathbb{R}_+$ satisfies:
    \begin{enumerate}
        \item \textbf{Positive definiteness:}
        $f_i(\boldsymbol{x})=0$ if and only if $\boldsymbol{x}=0$.
        \item \textbf{Continuity:} $f_i$ is continuous.
        \item \textbf{Positive homogeneity of degree one:} for every $c>0$,
        \begin{equation}
            f_i(c\boldsymbol{x})=c f_i(\boldsymbol{x}),
            \qquad \boldsymbol{x}\in\mathbb{R}_+^{|pa(i)|+1}.
        \end{equation}
        \item \textbf{Coordinatewise monotonicity:} if $0\leq x\leq y$
        componentwise, then $f_i(\boldsymbol{x})\leq f_i(\boldsymbol{y})$.
        \item \textbf{Noise normalization:} writing the noise as the final
        coordinate,
        \begin{equation}\label{eq:noise-normalization}
            f_i(0,\ldots,0,1)=1.
        \end{equation}
    \end{enumerate}
    \item The marginal distribution of each $X_i$ is continuous.
\end{itemize}
\end{definition}

\begin{remark}\label{rem:deletion-monotonicity}
Coordinatewise monotonicity implies the coordinate-deletion property used in
the original formulation of \citet{routh2026heavy}. In particular, for
$J\subseteq\{1,\ldots,|pa(i)|+1\}$ and $m_J=\min_{j\in J}x_j$,
\begin{equation}\label{eq:monotone imp}
    f_i(x)\geq f_i(m_J\mathbf{1}_J)
    =m_J f_i(\mathbf{1}_J),
\end{equation}
where $\mathbf{1}_J=\sum_{j\in J}\mathbf{e}_j$. The stronger coordinatewise
condition is needed because coordinate deletion alone is generally not
preserved under composition.
\end{remark}

By recursively substituting the local structural equations along the
ancestral relations, each component $X_i$ of $\boldsymbol{X}$ can be expressed
as a function $F_{{An}(i)}$ of its ancestral noise variables:
\begin{equation}\label{eq:anc agg}
    X_i = F_{{An}(i)} (\boldsymbol{\varepsilon}_{An(i)}), \quad i=1,\dots,d,
\end{equation}
where $\boldsymbol{\varepsilon}_{An(i)}=(\varepsilon_h)_{h\in An(i)}$ is
written in a fixed order. Induction along any topological ordering of the DAG
shows that $F_{An(i)}$ is continuous, positively homogeneous of degree one,
coordinatewise nondecreasing, and vanishes only at the origin.

\begin{definition}\label{Def:AIR}
    The \emph{ancestral impulse response} (AIR) matrix
    $F\in\mathbb{R}_+^{d\times d}$ is defined by
\[
    F_{hi} = \begin{cases}
        F_{{An}(i)}\!\left(\mathbf{e}^{(i)}_h\right)
        & \text{if } h \in {An}(i), \\
        0 & \text{otherwise.}
    \end{cases}
\]
Here $\mathbf{e}^{(i)}_h$ denotes the coordinate vector in
$\mathbb{R}^{|An(i)|}$ corresponding to $h$.
\end{definition}
If $h\in An(i)$, then $F_{hi}$ is the response at node $i$ to a unit
impulse in $\varepsilon_h$, with all other ancestral noises set to zero.
Positive homogeneity implies that an impulse of size $t\geq0$ produces the
response $tF_{hi}$. Moreover, positive definiteness and
\eqref{eq:noise-normalization} give
\begin{equation}\label{eq:AIR-support-diagonal}
    F_{hi}>0 \iff h\in An(i),
    \qquad F_{ii}=1.
\end{equation}
Thus, the support of $F$ is exactly the reflexive reachability relation of the
DAG.

The following lemma describes the limiting joint tail of an HT-HSCM. For
$h\in V$, write
\[
    \boldsymbol F_{h\cdot}:=(F_{h1},\ldots,F_{hd})
\]
for the $h$th row of the AIR matrix, and recall that $F_{hi}=0$ whenever
$h\notin An(i)$.

\begin{lemma}\label{distributionHTHSCM}
Let $\boldsymbol X=(X_1,\ldots,X_d)$ be an HT-HSCM as in
Definition~\ref{Def:HT-HSCM}, and define the Borel measure
$\nu_{\boldsymbol X}$ on $[0,\infty)^d\setminus\{\mathbf 0\}$ by
\begin{equation}\label{eq:nuX}
    \nu_{\boldsymbol X}(A)
    :=\sum_{h=1}^{d}\int_0^\infty
      \mathbf 1_{\{\eta\boldsymbol F_{h\cdot}\in A\}}\,
      \alpha\eta^{-\alpha-1}\,d\eta .
\end{equation}
Then, for every Borel set $A\subseteq[0,\infty)^d\setminus\{\mathbf 0\}$
that is separated from the origin (that is, $A\cap U=\varnothing$ for
some open neighbourhood $U$ of $\mathbf 0$; see
\citealp[Definition~2.1.1]{kulik2020heavy}) and satisfies
$\nu_{\boldsymbol X}(\partial A)=0$,
\begin{equation}\label{eq:HTHSCM-MRV}
    \lim_{x\to\infty}
    \frac{\mathbb P\bigl(x^{-1}\boldsymbol X\in A\bigr)}
         {\mathbb P(\varepsilon_1>x)}
    =\nu_{\boldsymbol X}(A).
\end{equation}
In particular, if $\nu_{\boldsymbol X}(A)>0$, then
$\mathbb P(x^{-1}\boldsymbol X\in A)\sim\nu_{\boldsymbol X}(A)\,
\ell(x)x^{-\alpha}$ as $x\to\infty$.
\end{lemma}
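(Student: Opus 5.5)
The plan is to write $\boldsymbol X=\Phi(\boldsymbol\varepsilon)$, where $\Phi:[0,\infty)^d\to[0,\infty)^d$ has components $\Phi_i(\boldsymbol\varepsilon)=F_{An(i)}(\boldsymbol\varepsilon_{An(i)})$ as in \eqref{eq:anc agg}, and to combine multivariate regular variation of the i.i.d.\ innovation vector with a continuous-mapping argument for the homogeneous map $\Phi$. The first step is the classical fact that $\boldsymbol\varepsilon=(\varepsilon_1,\ldots,\varepsilon_d)$ is multivariate regularly varying with limit measure concentrated on the axes:
\[
\frac{\mathbb P(x^{-1}\boldsymbol\varepsilon\in B)}{\mathbb P(\varepsilon_1>x)}\to\nu_{\boldsymbol\varepsilon}(B):=\sum_{h=1}^d\int_0^\infty\mathbf 1_{\{\eta\mathbf e_h\in B\}}\,\alpha\eta^{-\alpha-1}\,d\eta
\]
for Borel $B$ separated from the origin with $\nu_{\boldsymbol\varepsilon}(\partial B)=0$. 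This is the single-big-jump principle, and I would quote it from \citet{kulik2020heavy} or \citet{resnick2007heavy}. The reason it holds is that, by independence, $\mathbb P(\varepsilon_h>\delta x,\varepsilon_k>\delta x)=O(\mathbb P(\varepsilon_1>x)^2)$ for $h\neq k$. Consequently only one coordinate can be of order $x$, and each one-dimensional marginal contributes the measure $\alpha\eta^{-\alpha-1}d\eta$ along its axis.

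The second step transfers this to $\boldsymbol X$ through $\Phi$. The map $\Phi$ is continuous and positively homogeneous of degree one, and it vanishes only at $\mathbf 0$ (each $\Phi_i$ vanishes only at the origin of its arguments, and node $i$ itself lies in $An(i)$). By compactness of the unit sphere, $m:=\min_{\|\boldsymbol e\|=1}\|\Phi(\boldsymbol e)\|>0$, so $\|\Phi(\boldsymbol e)\|\ge m\|\boldsymbol e\|$. Hence, if $A$ is separated from the origin by a ball of radius $r$, then $B:=\Phi^{-1}(A)$ is separated from the origin by the ball of radius $r/m$. Homogeneity gives $\{x^{-1}\boldsymbol X\in A\}=\{x^{-1}\boldsymbol\varepsilon\in\Phi^{-1}(A)\}$. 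Continuity of $\Phi$ gives $\partial\Phi^{-1}(A)\subseteq\Phi^{-1}(\partial A)$. Therefore
\[
\nu_{\boldsymbol\varepsilon}(\partial B)\le\nu_{\boldsymbol\varepsilon}(\Phi^{-1}(\partial A))=\nu_{\boldsymbol X}(\partial A)=0,
\]
provided $\nu_{\boldsymbol\varepsilon}\circ\Phi^{-1}=\nu_{\boldsymbol X}$. This identity follows from homogeneity: $\Phi(\eta\mathbf e_h)=\eta\Phi(\mathbf e_h)$, and $\Phi_i(\mathbf e_h)=F_{hi}$ when $h\in An(i)$. When $h\notin An(i)$, the vector $\mathbf e_h$ has zero $An(i)$-coordinates, so $\Phi_i(\mathbf e_h)=0=F_{hi}$. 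Thus $\Phi(\eta\mathbf e_h)=\eta\boldsymbol F_{h\cdot}$, which matches \eqref{eq:nuX}. Applying the first step to $B$ then yields \eqref{eq:HTHSCM-MRV}.

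The final assertion is immediate from \eqref{eq:HTHSCM-MRV} and \eqref{eq:RV tail}.

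I expect the main obstacle to be the first step, if it must be proved rather than cited, because it requires the portmanteau-type convergence on sets $B$ with $\nu_{\boldsymbol\varepsilon}$-null boundary. I would handle it by first proving vague (or $\mathbb M_0$) convergence on finite unions of rectangles bounded away from $\mathbf 0$, where independence makes the computation explicit, and then invoking the portmanteau theorem of \citet[Section~2.1]{kulik2020heavy}. The continuous-mapping step itself is routine once the bound $\|\Phi(\boldsymbol e)\|\ge m\|\boldsymbol e\|$ is in place.
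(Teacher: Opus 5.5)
Your proposal is correct and follows essentially the same route as the paper. Both arguments use regular variation of the innovation vector with exponent measure on the axes, transfer it through the continuous homogeneous map (with $\|\Phi(\boldsymbol e)\|\geq m\|\boldsymbol e\|$ keeping $\Phi^{-1}(A)$ separated from the origin), and identify the image measure through $\Phi(\eta\mathbf e_h)=\eta\boldsymbol F_{h\cdot}$.

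The only difference is in the mapping step. The paper cites Proposition~2.1.12 of \citet{kulik2020heavy}, after extending $H$ to $\mathbb R^d$ by $H(|y_1|,\ldots,|y_d|)$, whereas you carry it out by hand via $\partial\Phi^{-1}(A)\subseteq\Phi^{-1}(\partial A)$. That is valid as long as the boundaries in your first step are taken relative to $[0,\infty)^d$ rather than $\mathbb R^d$; otherwise, for a set such as $\{z_1>1\}$, points of the axis would lie on the boundary and carry positive $\nu_{\boldsymbol\varepsilon}$-mass.
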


\begin{proof}
Let $\boldsymbol\varepsilon:=(\varepsilon_1,\ldots,\varepsilon_d)$, and let
$\nu_\alpha(d\eta):=\alpha\eta^{-\alpha-1}\,d\eta$ on $(0,\infty)$. Since
$\varepsilon_1,\ldots,\varepsilon_d$ are i.i.d., nonnegative and regularly
varying with index $\alpha$, the vector $\boldsymbol\varepsilon$ is
regularly varying with exponent measure
\[
    \nu_{\boldsymbol\varepsilon}
    =\sum_{h=1}^{d}
     \delta_0\otimes\cdots\otimes\nu_\alpha\otimes\cdots\otimes\delta_0 ,
\]
with $\nu_\alpha$ in the $h$th position and with $\mathbb P(\varepsilon_1>x)$
as normalization \citep[Example~2.1.4]{kulik2020heavy}.

Define $H:[0,\infty)^d\to[0,\infty)^d$ by
\[
    H(\boldsymbol y)
    :=\bigl(F_{An(1)}(\boldsymbol y_{An(1)}),\ldots,
            F_{An(d)}(\boldsymbol y_{An(d)})\bigr),
    \qquad
    \boldsymbol y_{An(i)}:=(y_h)_{h\in An(i)} .
\]
By \eqref{eq:anc agg}, $\boldsymbol X=H(\boldsymbol\varepsilon)$. Since each
$F_{An(i)}$ is continuous and positively homogeneous of degree one, so is
$H$. Moreover, $H(\boldsymbol y)=\mathbf 0$ only if $\boldsymbol y=\mathbf 0$,
because $h\in An(h)$ for every $h$ and each $F_{An(i)}$ vanishes only at the
origin; by continuity and homogeneity, $\|H(\boldsymbol y)\|\geq c\|\boldsymbol y\|$
for some $c>0$, so $H^{-1}(A)$ is separated from the origin whenever $A$ is.
(To regard $H$ as a map on $\mathbb R^d$, set
$H(\boldsymbol y):=H(|y_1|,\ldots,|y_d|)$; this does not change
$H(\boldsymbol\varepsilon)$.) By the transfer of regular variation under
continuous homogeneous maps \citep[Proposition~2.1.12, with
$\gamma=1$]{kulik2020heavy}, $\boldsymbol X=H(\boldsymbol\varepsilon)$ is
regularly varying with exponent measure
$\nu_{\boldsymbol\varepsilon}\circ H^{-1}$ and the same normalization,
provided this measure is not identically zero. That is,
\[
    \lim_{x\to\infty}
    \frac{\mathbb P\bigl(x^{-1}\boldsymbol X\in A\bigr)}
         {\mathbb P(\varepsilon_1>x)}
    =\nu_{\boldsymbol\varepsilon}\circ H^{-1}(A)
\]
for every Borel set $A$ separated from the origin with
$\nu_{\boldsymbol\varepsilon}\circ H^{-1}(\partial A)=0$.

It remains to show that $\nu_{\boldsymbol\varepsilon}\circ H^{-1}
=\nu_{\boldsymbol X}$ and that this measure is not zero. The measure
$\nu_{\boldsymbol\varepsilon}$ is concentrated on the coordinate axes
$\{\eta\mathbf e_h:\eta>0\}$, so positive homogeneity gives, for every Borel
set $A$,
\[
    \nu_{\boldsymbol\varepsilon}\circ H^{-1}(A)
    =\sum_{h=1}^{d}\int_0^\infty
      \mathbf 1_{\{H(\eta\mathbf e_h)\in A\}}\,\alpha\eta^{-\alpha-1}\,d\eta
    =\sum_{h=1}^{d}\int_0^\infty
      \mathbf 1_{\{\eta H(\mathbf e_h)\in A\}}\,\alpha\eta^{-\alpha-1}\,d\eta .
\]
The $i$th coordinate of $H(\mathbf e_h)$ is
$F_{An(i)}(\mathbf e^{(i)}_h)=F_{hi}$ if $h\in An(i)$, and
$F_{An(i)}(\mathbf 0)=0=F_{hi}$ otherwise. Hence
$H(\mathbf e_h)=\boldsymbol F_{h\cdot}$ by Definition~\ref{Def:AIR}, and
$\nu_{\boldsymbol\varepsilon}\circ H^{-1}=\nu_{\boldsymbol X}$. Finally,
$F_{11}=1$, so
$\nu_{\boldsymbol X}(\{\boldsymbol z:z_1>1\})
\geq\int_1^\infty\alpha\eta^{-\alpha-1}\,d\eta=1>0$. This proves
\eqref{eq:HTHSCM-MRV}. The last statement follows from
$\mathbb P(\varepsilon_1>x)\sim\ell(x)x^{-\alpha}$; see \eqref{eq:RV tail}.
\end{proof}

The measure $\nu_{\boldsymbol X}$ is concentrated on the $d$ rays
$\{\eta\boldsymbol F_{h\cdot}:\eta>0\}$, $h\in V$. This is the
single-big-jump principle: an extreme value of $\boldsymbol X$ is caused by
one large innovation $\varepsilon_h$, which the model transmits to every
node in the direction $\boldsymbol F_{h\cdot}$.

\subsection{Examples}

In this subsection, we present several concrete examples of SCMs that fall
within the class of HT-HSCMs introduced in Definition~\ref{Def:HT-HSCM}.
Throughout, we assume that $\varepsilon_1,\ldots,\varepsilon_d$ are i.i.d.\
nonnegative regularly varying random variables satisfying \eqref{eq:RV tail}.
Let $\mathcal{P}_{hi}$ denote the set of all directed paths from node $h$ to
node $i$. For a node with no parents, an empty sum or maximum is understood to
equal zero.

\begin{enumerate}
\item \textbf{Sum-linear SCM}.
Consider the structural equation
\[
    X_i = \sum_{h \in {pa}(i)} c_{hi} X_h + \varepsilon_i,
\]
where the structural function is given by
\[
    f_i(x_{{pa}(i)}, z) := \sum_{h \in {pa}(i)} c_{hi} x_h + z,
\]
and the coefficients $c_{hi}>0$. In this setting, the structural equations
dictate that the noise variables are additively combined. By recursively
composing the structural functions along the DAG, we obtain an induced
ancestral aggregation map $F_{{An}(i)}$ that expands additively over the
network topology. The AIRs in Definition~\ref{Def:AIR} are given by
\begin{equation}\label{eq:linear SCM AIR}
    F_{hi} =
    \begin{cases}
        \sum_{\pi \in \mathcal{P}_{hi}} \prod_{(u, v) \in \pi} c_{uv}, & h \in {an}(i), \\[4pt]
        1, & h = i, \\[4pt]
        0, & h \notin {An}(i).
    \end{cases}
\end{equation}
Here, the product $\prod_{(u,v)\in \pi}c_{uv}$ is taken over all directed
edges along the path $\pi$. The matrix $F$ encodes the total marginal
influence of each ancestor by summing the accumulated weights across all
possible mediating paths.

\item \textbf{Max-linear model}.
Consider the structural equation utilizing the maximum operator, denoted by
the symbol $\bigvee$:
\[
    X_i = \bigvee_{h \in {pa}(i)} (c_{hi} X_h) \vee \varepsilon_i,
\]
where the structural function evaluates to the largest component:
\[
    f_i(x_{{pa}(i)}, z) := \bigvee_{h \in {pa}(i)} (c_{hi} x_h) \vee z,
\]
and the coefficients $c_{hi}>0$. Unlike the linear model, extreme values
propagate exclusively through the highest-weight path rather than
accumulating. The AIRs are given by
\[
    F_{hi} = \begin{cases}
        \bigvee_{\pi \in \mathcal{P}_{hi}} \prod_{(u , v) \in \pi} c_{uv}, & h \in {an}(i), \\[4pt]
        1, & h = i, \\[4pt]
        0, & h \notin {An}(i).
    \end{cases}
\]
In this case, the matrix $F$ isolates the single dominant path transmitting
the extreme noise from $h$ to $i$.

\item \textbf{$\ell_p$-aggregation model.}\label{lpmodel}
Consider the structural equation utilizing $\ell_p$ aggregation:
\begin{equation}
    X_i=f_i\!\left((X_h)_{h\in pa(i)},\varepsilon_i\right),
    \qquad i=1,\ldots,d,
    \label{eq:htlp}
\end{equation}
where the structural function is given by the $\ell_p$ aggregation map
\[
    f_i(x_{{pa}(i)}, z) := \Bigl\| \bigl( (c_{hi} x_h)_{h\in{pa}(i)},\, z \bigr) \Bigr\|_p, \qquad p\in(0,\infty),
\]
and the coefficients $c_{hi}>0$. For $p\geq 1$, this is the usual $\ell_p$
norm; for $0<p<1$, it is an $\ell_p$ quasi-norm. The induced ancestral
aggregation function $F_{{An}(i)}$ in \eqref{eq:anc agg} is nonlinear, and
its AIRs admit the explicit representation
\[
F_{hi}
=
\begin{cases}
\displaystyle
\left[
\sum_{\pi\in\mathcal{P}_{hi}}
\left(
\prod_{(u,v)\in\pi} c_{uv}
\right)^p
\right]^{1/p},
& h\in an(i), \\[12pt]
1,
& h=i, \\[4pt]
0,
& h\notin An(i).
\end{cases}
\]
Thus, the influence transmitted from $h$ to $i$ is obtained by taking the
$\ell_p$-aggregate of the weights accumulated along all directed paths from
$h$ to $i$. We treat the max-linear model as the case $p=\infty$. As
$p\to\infty$, the $\ell_p$ aggregation map converges to the maximum, and the
AIR above converges to the max-linear AIR.
\end{enumerate}

\begin{lemma}[Multiplicative transitivity of AIRs]\label{lem:AIR-transitivity}
For every $h,i,j\in V$, the AIR matrix of an HT-HSCM satisfies
\begin{equation}\label{inequality}
    F_{hj}\geq F_{hi}F_{ij}.
\end{equation}
In particular, taking $i=1$ and $j=2$ gives
$F_{h2}\geq F_{h1}F_{12}$ for every $h\in V$.
\end{lemma}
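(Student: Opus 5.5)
We prove $F_{hj}\geq F_{hi}F_{ij}$ by first disposing of the degenerate cases using the support property \eqref{eq:AIR-support-diagonal}, and then handling the main case with a monotonicity argument on the recursive structure.

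\textbf{Degenerate cases.} If $F_{hi}=0$ or $F_{ij}=0$, the right-hand side vanishes and the inequality is trivial, since every AIR is nonnegative. So we may assume $h\in An(i)$ and $i\in An(j)$. Then $h\in An(j)$ by transitivity of ancestry. If $i=j$ or $h=i$, the claim is an equality, because $F_{ii}=F_{jj}=1$. The remaining case is $h\in an(i)$ and $i\in an(j)$.

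\textbf{Reformulation as a pathwise bound.} The key intermediate statement is a deterministic comparison: for every innovation vector $\boldsymbol y\in[0,\infty)^d$,
\[
F_{An(j)}(\boldsymbol y_{An(j)})\;\geq\;F_{ij}\,F_{An(i)}(\boldsymbol y_{An(i)}).
\]
Applying this to $\boldsymbol y=\mathbf e_h$ gives exactly $F_{hj}\geq F_{ij}F_{hi}$, using $H(\mathbf e_h)=\boldsymbol F_{h\cdot}$ from the proof of Lemma~\ref{distributionHTHSCM}.

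\textbf{Proof of the pathwise bound.} Set $x_v:=F_{An(v)}(\boldsymbol y_{An(v)})$; these are the node values produced by the structural recursion with innovations $\boldsymbol y$. Define a second innovation vector $\boldsymbol y'$ by $y'_i:=x_i$ and $y'_v:=0$ for $v\neq i$, with node values $x'_v:=F_{An(v)}(\boldsymbol y'_{An(v)})$. By homogeneity, $x'_v=x_iF_{iv}$ for every $v$.

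We claim $x_v\geq x'_v$ for all $v$, and prove it by induction along a topological order.
\begin{itemize}
\item For $v$ with $i\notin An(v)$, we have $x'_v=0\leq x_v$.
\item For $v=i$, we have $x'_i=x_i F_{ii}=x_i$, so equality holds.
\item For $v\in de(i)$, the innovations satisfy $y'_v=0\leq y_v$, and by the induction hypothesis $x'_{pa(v)}\leq x_{pa(v)}$ componentwise. Coordinatewise monotonicity of $f_v$ then gives $x'_v=f_v(x'_{pa(v)},0)\leq f_v(x_{pa(v)},y_v)=x_v$.
\end{itemize}
Taking $v=j$ yields $x_j\geq F_{ij}x_i$, which is the pathwise bound.

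\textbf{Main obstacle.} The delicate step is the induction at the node $i$ itself. There the comparison is not monotone in the innovations, since $y'_i=x_i$ may exceed $y_i$. It works only because we restart the recursion at $i$ with the already-computed value $x_i$ rather than comparing innovations. Concretely, the structural value of $i$ in the primed system is defined to equal $x_i F_{ii}=x_i$; it is not obtained by applying $f_i$ to comparable inputs. For descendants of $i$, the primed values are still produced by the same maps $f_v$, so this step must be phrased carefully. It is cleanest to define the primed system as the recursion on $De(i)$ started from the value $x_i$ at node $i$, and to identify the value it produces at $j$ with $x_iF_{ij}$ via homogeneity of the composed map from $i$'s value to $j$.
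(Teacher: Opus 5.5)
Your proof is correct and is essentially the paper's argument: an induction along a topological order of $De(i)$ that compares the actual node values with those produced by an impulse at $i$ alone. It uses coordinatewise monotonicity of $f_v$ with zero noise at strict descendants, and the fact that parents outside $De(i)$ contribute zero to the smaller system. The only difference is that you run the induction for an arbitrary innovation vector rather than for $\mathbf e_h$, which gives the pathwise bound of Lemma~\ref{lem:ratio-lower-bound} at the same time; your ``main obstacle'' is not a real difficulty, since $x_i'=f_i(\mathbf 0,x_i)=x_i$ follows directly from the recursion and the noise normalization.
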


\begin{proof}
If $i\notin An(j)$, then $F_{ij}=0$, so \eqref{inequality} follows from
$F_{hj}\geq0$. Similarly, if $h\notin An(i)$, then $F_{hi}=0$. It therefore
suffices to consider $h\in An(i)$ and $i\in An(j)$.

We prove, by induction along a topological ordering of the nodes in $De(i)$,
that
\begin{equation}\label{eq:induction-air}
    F_{hv}\geq F_{hi}F_{iv}, \qquad v\in De(i).
\end{equation}
For $v=i$, this is an equality because $F_{ii}=1$. Now let
$v\in de(i)$ and suppose that \eqref{eq:induction-air} holds for every parent
of $v$ that lies in $De(i)$. If $u\in pa(v)\setminus De(i)$, then
$F_{iu}=0$, whereas $F_{hu}\geq0$. Consequently,
\[
    F_{hu}\geq F_{hi}F_{iu}
    \qquad\text{for every }u\in pa(v).
\]
Because $v\neq h$ whenever $F_{hi}>0$ and $v\in de(i)$, the noise coordinate
at $v$ is zero under both impulses. Coordinatewise monotonicity and positive
homogeneity of $f_v$ therefore yield
\begin{align*}
    F_{hv}
    &=f_v\bigl((F_{hu})_{u\in pa(v)},0\bigr)\\
    &\geq f_v\bigl((F_{hi}F_{iu})_{u\in pa(v)},0\bigr)\\
    &=F_{hi}f_v\bigl((F_{iu})_{u\in pa(v)},0\bigr)
      =F_{hi}F_{iv}.
\end{align*}
This completes the induction. Taking $v=j$ proves \eqref{inequality}.
\end{proof}

\begin{remark}\label{rem:unnormalized-air}
The normalization \eqref{eq:noise-normalization} is essential for the form of
Lemma~\ref{lem:AIR-transitivity}. Without it, $F_{ii}$ need not equal one, and
the scale-invariant version of the bound is
\[
    F_{hj}\geq \frac{F_{hi}F_{ij}}{F_{ii}}.
\]
The three model classes above all satisfy $F_{ii}=1$.
\end{remark}

\begin{lemma}\label{lem:ratio-lower-bound}
For all $i,j\in V$, $X_j\geq F_{ij}X_i$ almost surely. In particular,
$X_j/X_i\geq F_{ij}$ almost surely.
\end{lemma}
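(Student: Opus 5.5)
The plan is to prove the pathwise (deterministic) inequality
\[
X_v\geq F_{iv}X_i\quad\text{for every }v\in De(i),
\]
for every realization of the innovations, by induction along a topological ordering of $De(i)$. This mirrors the proof of Lemma~\ref{lem:AIR-transitivity}, with the random vector $\boldsymbol X$ playing the role of the impulse response to $\mathbf e_h$. First the trivial cases. If $i\notin An(j)$, then $F_{ij}=0$ and $X_j\geq 0$, so there is nothing to prove. If $j=i$, the claim is $X_i\geq X_i$, since $F_{ii}=1$ by \eqref{eq:AIR-support-diagonal}. So it remains to treat $j\in de(i)$.

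For the induction, fix $v\in de(i)$ and assume $X_u\geq F_{iu}X_i$ for every parent $u\in pa(v)\cap De(i)$. For a parent $u\in pa(v)\setminus De(i)$ we have $F_{iu}=0$, so $X_u\geq 0=F_{iu}X_i$ holds trivially. Hence the bound holds for every $u\in pa(v)$. Also $\varepsilon_v\geq 0$. Coordinatewise monotonicity of $f_v$ then gives
\[
X_v=f_v\bigl((X_u)_{u\in pa(v)},\varepsilon_v\bigr)\geq f_v\bigl((F_{iu}X_i)_{u\in pa(v)},0\bigr).
\]

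Next I pull out the common factor $X_i$.

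\textbf{Case $X_i>0$.} By positive homogeneity the right-hand side equals $X_i\,f_v\bigl((F_{iu})_{u\in pa(v)},0\bigr)$. It remains to identify $f_v\bigl((F_{iu})_{u\in pa(v)},0\bigr)$ with $F_{iv}$. This follows from the recursive definition of $F_{An(v)}$ in \eqref{eq:anc agg}: evaluating the composed map at the unit impulse $\mathbf e_i$ sets $\varepsilon_v=0$ (since $v\neq i$) and feeds each parent $u$ its own impulse response $F_{iu}$. For parents $u\notin De(i)$ this response is $F_{An(u)}(\mathbf 0)=0=F_{iu}$, consistent with Definition~\ref{Def:AIR}. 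This identity was already used in the proof of Lemma~\ref{lem:AIR-transitivity}.

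\textbf{Case $X_i=0$.} The right-hand side is $f_v(\mathbf 0)=0$, so the inequality $X_v\geq 0=F_{iv}X_i$ holds directly.

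Taking $v=j$ completes the induction and gives $X_j\geq F_{ij}X_i$ surely, hence almost surely.

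For the ratio statement, I would note that $X_i>0$ almost surely, because the marginal of $X_i$ is continuous and so has no atom at $0$. Dividing by $X_i$ then gives $X_j/X_i\geq F_{ij}$ almost surely.

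The only step needing care is the identification $F_{iv}=f_v\bigl((F_{iu})_{u\in pa(v)},0\bigr)$ when some parents lie outside $De(i)$, or when $v\notin An(\cdot)$ bookkeeping arises. I would handle it by writing $F_{An(v)}$ explicitly as $f_v$ composed with the $F_{An(u)}$, $u\in pa(v)$, each restricted to its own coordinates, and checking that the unit impulse restricts correctly.
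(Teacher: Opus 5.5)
Your proof is correct and uses the same idea as the paper's: monotonicity removes the innovations other than $\varepsilon_i$, homogeneity factors out $X_i$, and the unit impulse identifies the remaining constant with $F_{ij}$. The paper packages the recursion into one composed map $g$ with $X_j=g(X_i,\boldsymbol\varepsilon_K)$ and $g(1,\mathbf 0)=F_{ij}$, whereas you unroll it as an explicit induction over $De(i)$ as in Lemma~\ref{lem:AIR-transitivity}, and you also spell out the case $X_i=0$ and the almost-sure positivity of $X_i$ needed for the ratio statement, which the paper leaves implicit.
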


\begin{proof}
If $i\notin An(j)$, then $F_{ij}=0$, and if $i=j$, then $F_{ii}=1$. Let
$i\in an(j)$. Substituting the structural equations recursively for the
nodes in $An(j)$, but keeping $X_i$ as an input, gives
$X_j=g(X_i,\boldsymbol\varepsilon_K)$ with $K\subseteq An(j)\setminus\{i\}$
and a continuous, coordinatewise nondecreasing, positively homogeneous map
$g$. A unit impulse in $\varepsilon_i$ gives $X_i=F_{ii}=1$ and
$\boldsymbol\varepsilon_K=\mathbf 0$, so $g(1,\mathbf 0)=F_{ij}$. Hence
$X_j\geq g(X_i,\mathbf 0)=X_i\,g(1,\mathbf 0)=F_{ij}X_i$.
\end{proof}

\section{A Consistent Estimator of the AIR}
\label{sec:estimator}

Let
\[
\bigl(X_{1r},\ldots,X_{dr}\bigr), \qquad r=1,\ldots,n,
\]
be i.i.d.\ copies of an HT-HSCM $\boldsymbol{X}=(X_1,\ldots,X_d)$ defined on a
DAG $\mathcal{D}=(V,E)$.

\begin{definition}[Extremal minimum ratio estimator]\label{def:EMRE}
Let $1,2\in V$, and let
\[
X_{1,(1)}\leq \cdots \leq X_{1,(n)}
\]
denote the order statistics of $(X_{1r})_{r=1}^n$. For a sequence
$k_n<n$, define the set of indices corresponding to the $k_n$ largest
observations of $X_1$ by
\[
D_n
:=
\left\{
1\leq r\leq n:
X_{1r}>X_{1,(n-k_n)}
\right\}.
\]
The \emph{extremal minimum ratio estimator} (EMRE) of $F_{12}$ is
\begin{equation}\label{defEMRE}
    \widehat{F}_{12}
    :=
    \bigwedge_{r\in D_n}
    \frac{X_{2r}}{X_{1r}}.
\end{equation}
\end{definition}

The estimator in \eqref{defEMRE} is motivated by the lower endpoint of the
support of the ratio $X_2/X_1$. By Lemma~\ref{lem:ratio-lower-bound}, and
because the limiting conditional law \eqref{tailratiolimit} established in the
proof of Proposition~\ref{prop:EMRE-consistency} has an atom at $F_{12}$,
\[
\inf \operatorname{Supp}\left(\frac{X_2}{X_1}\right)
=
\begin{cases}
F_{12}, & 1\in \operatorname{An}(2),\\
0, & 1\notin \operatorname{An}(2).
\end{cases}
\]
Thus, $F_{12}>0$ if and only if $1\in\operatorname{An}(2)$. The estimator
$\widehat F_{12}$ restricts the minimum ratio to observations for which
$X_1$ is extreme. Under suitable tail assumptions, we show that this
estimator consistently estimates the AIR $F_{12}$ and derive its rate of
convergence.

\begin{assumption}[Lower tail]\label{ass:lower-tail}
There exists $\delta_1\in(0,1)$ such that, for every $i\in V$,
\begin{equation}\label{cdfcontrol}
    G_i(t)=o\bigl(t^{\alpha/(1-\delta_1)}\bigr), \qquad t\downarrow0,
\end{equation}
where $G_i$ denotes the distribution function of $X_i$.
\end{assumption}

\begin{assumption}[Negative moment]\label{ass:neg-moment}
There exists $\delta_2\in(0,1)$ such that, for every $i\in V$,
\begin{equation}\label{expectation control}
    \mathbb E\!\left[X_i^{-(\alpha+\delta_2)}\right]<\infty.
\end{equation}
\end{assumption}

\begin{remark}\label{rem:assumptions}
(i) Both assumptions concern only the innovations. If $s$ is a source node,
then $X_s=\varepsilon_s$. For every $i$, monotonicity and
\eqref{eq:noise-normalization} give $X_i\geq f_i(\mathbf 0,\varepsilon_i)
=\varepsilon_i$. Hence each assumption holds for every $i\in V$ if and only
if it holds with $X_i$ replaced by $\varepsilon_1$.

(ii) Assumption~\ref{ass:lower-tail} implies
Assumption~\ref{ass:neg-moment}. Since
$\mathbb E[\varepsilon_1^{-s}]=s\int_0^\infty t^{-s-1}
\mathbb P(\varepsilon_1\leq t)\,dt$, the moment is finite for every
$s<\alpha/(1-\delta_1)$. So Assumption~\ref{ass:neg-moment} holds for any
$\delta_2<\min\{1,\alpha\delta_1/(1-\delta_1)\}$. We state it separately
because the proofs use it in this form.

(iii) Both assumptions hold for Fr\'echet innovations, for which
$\mathbb P(\varepsilon_1\leq t)=\exp(-t^{-\alpha})$ vanishes faster than any
power of $t$, and for Pareto innovations with a strictly positive lower
endpoint, for which $\mathbb P(\varepsilon_1\leq t)=0$ for small $t$. They also
hold for the absolute value of a Student's $t$ random variable with
$\nu\in(0,1)$ degrees of freedom, for which $\alpha=\nu$ and
$\mathbb P(\varepsilon_1\leq t)\sim ct$ as $t\downarrow0$ for some $c>0$,
provided that $\delta_1,\delta_2<1-\nu$. They fail when $\nu\geq1$, including
the half-Cauchy case $\nu=1$, because then
$\mathbb E[\varepsilon_1^{-(\alpha+\delta_2)}]=\infty$ for every $\delta_2>0$.
\end{remark}

\begin{proposition}\label{prop:EMRE-consistency}
Let $1,2\in V$ be distinct nodes of an HT-HSCM
(Definition~\ref{Def:HT-HSCM}), and suppose that
\[
    k_n\to\infty, \qquad \frac{k_n}{n}\to0,
\]
as $n\to\infty$. Then the following statements hold.

\begin{enumerate}
    \item\label{consistencyofemre}
    The extremal minimum ratio estimator is consistent for $F_{12}$:
    \begin{equation}
        \widehat F_{12} \xrightarrow{\mathbb P} F_{12}.
   \end{equation}

    \item Suppose, in addition, that
    Assumptions~\ref{ass:lower-tail} and~\ref{ass:neg-moment}
    hold. Let $G_1$ be the distribution function of $X_1$, and let
    \[
        a_n:=U_1(n)=G_1^{\leftarrow}\!\left(1-\frac1n\right),
        \qquad
        \Psi_{\theta}(x):=1-e^{-x^{\theta}},\quad x\geq0,
    \]
    where $\Psi_{\theta}$ is the distribution function of a standard Weibull
    random variable with shape parameter $\theta>0$. Under these additional
    assumptions:
    \begin{enumerate}
        \item  If $1\notin\operatorname{An}(2)$, put
        \[
            \mathcal H_{12}:=An(1)\setminus An(2),
            \qquad
            S_1:=\sum_{h\in An(1)}F_{h1}^{\alpha},
        \]
        and define
        \begin{equation}\label{eq:A-exact-constant}
            A
            :=
            \frac{\sum_{h\in\mathcal H_{12}}F_{h1}^{\alpha}}{S_1}\,
            \mathbb E\bigl[X_2^{-\alpha}\bigr].
        \end{equation}
        Then $0<A<\infty$ and, for every $x\geq0$,
        \begin{equation}\label{eq:nonancestor-weibull-limit}
            \lim_{n\to\infty}
            \mathbb P\!\left(
                a_n(\widehat F_{12}-F_{12})\leq x
            \right)
            =1-e^{-Ax^{\alpha}}
            =\Psi_{\alpha}(A^{1/\alpha}x).
        \end{equation}

        \item Suppose that $1\in\operatorname{an}(2)$ and that the
        structural equations are given by an $\ell_p$-aggregation model
        defined in Example~\ref{lpmodel}, with the same $p$ at every node.
        If $0<p<\infty$, put $b_{hi}:=F_{hi}$ and define
        \[
            B_0:=\{h\in An(1):b_{h2}=F_{12}b_{h1}\},
            \qquad
            B_0^c:=An(1)\setminus B_0,
        \]
        \[
            d_h:=b_{h2}^p-F_{12}^pb_{h1}^p,
            \qquad h\in B_0^c,
        \]
        and
        \begin{equation}\label{eq:Y2-definition-proposition}
            Y_2
            :=
            \sum_{h\in B_0^c}d_h\varepsilon_h^p
            +
            \sum_{h\in An(2)\setminus An(1)}b_{h2}^p\varepsilon_h^p.
        \end{equation}
        Let
        \begin{equation}\label{eq:B-exact-constant}
            B
            :=
            \frac{\displaystyle\sum_{h\in B_0}b_{h1}^{\alpha}}
                 {\displaystyle\sum_{h\in An(1)}b_{h1}^{\alpha}}
            \bigl(pF_{12}^{p-1}\bigr)^{\alpha/p}
            \mathbb E\!\left[Y_2^{-\alpha/p}\right].
        \end{equation}
        Then $0<B<\infty$ and, for every $x\geq0$,
        \begin{equation}\label{eq:ancestor-weibull-limit}
            \lim_{n\to\infty}
            \mathbb P\!\left(
                a_n^p(\widehat F_{12}-F_{12})\leq x
            \right)
            =1-e^{-Bx^{\alpha/p}}
            =\Psi_{\alpha/p}(B^{p/\alpha}x).
        \end{equation}
        If $p=\infty$ (the max-linear model), then
        \begin{equation}\label{maxlinearsuperconsistency}
            \mathbb P(\widehat F_{12}=F_{12})\longrightarrow1.
        \end{equation}
    \end{enumerate}
\end{enumerate}
\end{proposition}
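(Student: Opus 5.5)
The proof reduces every statement to a tail computation of the form
\[
\mathbb P(\widehat F_{12}>F_{12}+s_n)\approx\bigl(1-\mathbb P(X_2/X_1\le F_{12}+s_n\mid X_1>X_{1,(n-k_n)})\bigr)^{k_n}.
\]
To make this rigorous, we replace the random threshold $X_{1,(n-k_n)}$ by the deterministic one $u_n:=U_1(n/k_n)$. The usual intermediate order statistic result gives $X_{1,(n-k_n)}/u_n\ConvP1$. Because the ratio $X_2/X_1$ is bounded below by $F_{12}$ (Lemma~\ref{lem:ratio-lower-bound}), the EMRE satisfies $\widehat F_{12}\ge F_{12}$ always. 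Hence only the event $\{\widehat F_{12}\le F_{12}+s\}$ needs control, and it is monotone in the threshold. So the estimator can be sandwiched between minima over the sets $\{X_{1r}>(1\pm\epsilon)u_n\}$. Conditionally on the count of such points, which is binomial with mean about $(1\pm\epsilon)^{-\alpha}k_n$, these are i.i.d.\ draws from the law of $\boldsymbol X$ given $X_1>(1\pm\epsilon)u_n$. It then suffices to compute $p_n(s):=\mathbb P(X_2/X_1\le F_{12}+s,\ X_1>u)$ for the relevant $s=s_n$, and to show
\[
1-\mathbb P(\widehat F_{12}\le F_{12}+s_n)\approx\exp\bigl(-n\,p_n(s_n)\bigr).
\]
In particular, the limit does not depend on $k_n$: $n p_n$ involves $\mathbb P(X_1>u)\approx k_n/n$ multiplied by the conditional probability, and the $k_n$ cancels.

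\textbf{Consistency (part 1).} Fix $s>0$. Lemma~\ref{distributionHTHSCM} applied to the set $\{z_1>1,\ z_2/z_1\le F_{12}+s\}$ gives a positive limiting conditional probability. Its boundary has null mass for all but countably many $s$, and monotonicity handles the remaining $s$. The positivity comes from the ray $h=1$: by Lemma~\ref{lem:AIR-transitivity} all atoms $F_{h2}/F_{h1}$ are $\ge F_{12}$, with equality at $h=1$ carrying mass $F_{11}^\alpha/S_1>0$. This establishes the limit law \eqref{tailratiolimit} referred to earlier. Hence $\mathbb P(\widehat F_{12}>F_{12}+s)\le(1-c)^{k_n(1+o(1))}\to0$, which together with $\widehat F_{12}\ge F_{12}$ gives consistency.

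\textbf{Non-ancestral case (2a).} Here $F_{12}=0$ and $s_n=x/a_n$. We need
\[
n\,\mathbb P(X_2\le xX_1/a_n,\ X_1>u_n)\to Ax^\alpha .
\]
Heuristically, a large $X_1$ comes from a single large $\varepsilon_h$. If $h\in An(2)$, then $X_2$ is large too and the ratio is not small. If $h\in\mathcal H_{12}$, then $X_2$ is asymptotically independent of $\varepsilon_h$, and $X_1\approx F_{h1}\varepsilon_h$. The event then reads $X_2\le xF_{h1}\varepsilon_h/a_n$, i.e.\ $\varepsilon_h\ge a_nX_2/(xF_{h1})$. Conditioning on the other innovations, this has probability $\approx \bar F_\varepsilon(a_nX_2/(xF_{h1}))$. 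Since $n\bar F_\varepsilon(a_n)\to 1/S_1$ (because $\mathbb P(X_1>x)\sim S_1\bar F_\varepsilon(x)$), regular variation gives $(x F_{h1})^\alpha X_2^{-\alpha}/S_1$. Integrating yields $A x^\alpha$.

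Making this rigorous is the main obstacle. Three things must be controlled, all via Potter bounds (Lemma~\ref{Potter's bounds}) and dominated convergence:
\begin{itemize}
\item uniformity over small $X_2$, which is exactly where Assumption~\ref{ass:neg-moment} supplies the domination $\mathbb E[X_2^{-\alpha-\delta_2}]<\infty$;
\item the contribution of two simultaneous large innovations, which is $o(1/n)$;
\item the events where $X_2$ is tiny because the non-shocked innovations are small. These use Assumption~\ref{ass:lower-tail}: on $\{X_2\le t\}$ we need $G_2(t)=o(t^{\alpha/(1-\delta_1)})$ to beat the $t^{-\alpha}$ blow-up.
\end{itemize}
Finally, $X_1$ depends on $\varepsilon_h$ and on shared innovations, so we replace $X_1$ by $F_{h1}\varepsilon_h(1+o(1))$ on $\{\varepsilon_h>\text{large}\}$ using continuity and homogeneity of $F_{An(1)}$.

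\textbf{Ancestral $\ell_p$ case (2b).} We use the exact identity $X_i^p=\sum_{h}b_{hi}^p\varepsilon_h^p$, valid since the $\ell_p$ composition expands over paths. This gives
\[
X_2^p-F_{12}^pX_1^p=\sum_{h\in An(1)}(b_{h2}^p-F_{12}^pb_{h1}^p)\varepsilon_h^p+\sum_{h\in An(2)\setminus An(1)}b_{h2}^p\varepsilon_h^p .
\]
For the relevant shocks $h\in B_0$ this equals $Y_2$, which does not involve $\varepsilon_h$. A Taylor expansion then gives
\[
X_2/X_1-F_{12}\approx Y_2/(pF_{12}^{p-1}X_1^p),
\]
so $\{a_n^p(X_2/X_1-F_{12})\le x\}$ becomes $\{X_1\ge a_n(Y_2/(pF_{12}^{p-1}x))^{1/p}\}$. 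Rerunning the 2a computation with $X_2$ replaced by $(Y_2/(pF_{12}^{p-1}))^{1/p}$ and summing over $h\in B_0$ yields $Bx^{\alpha/p}$. Shocks from $h\in B_0^c$ contribute $o(1/n)$, since their ratio stays bounded away from $F_{12}$. The finiteness $\mathbb E[Y_2^{-\alpha/p}]<\infty$ follows because $Y_2\ge d_h\varepsilon_h^p$ for some term with positive coefficient, together with Assumption~\ref{ass:neg-moment}. Such a term exists since $B_0^c$ or $An(2)\setminus An(1)$ is nonempty; in particular $2\in An(2)\setminus An(1)$, with $b_{22}=1$.

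\textbf{Max-linear case.} For $p=\infty$, with probability $\ge c>0$ given $X_1$ large, the single big jump is $\varepsilon_1$ and it dominates all paths, so $X_2=F_{12}X_1$ exactly. Then $\mathbb P(\widehat F_{12}\ne F_{12})\le(1-c)^{k_n(1+o(1))}\to0$.
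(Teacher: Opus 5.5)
Your reduction to the deterministic thresholds $(1\pm\epsilon)u_n$ matches the paper's proof, as do the single-observation quantity $n\,\mathbb P(X_2/X_1\le F_{12}+s_n,\ X_1>cu_n)$, the consistency argument through the limiting conditional law with its atom $F_{11}^\alpha/S_1$ at $F_{12}$, and the max-linear argument. The genuine difference is in parts 2(a) and 2(b). You split according to which single innovation $\varepsilon_h$ is large and use $X_1\approx F_{h1}\varepsilon_h$. The paper never separates the private innovations from one another.

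In 2(a) the paper works with the private part $Z_{12'}:=F_{An(1)}(\mathbf 0,\boldsymbol\varepsilon_{12'})$. This is independent of $X_2$, and its tail $\sum_{h\in\mathcal H_{12}}F_{h1}^\alpha\,\mathbb P(\varepsilon_1>t)$ follows from the transfer argument of Lemma~\ref{distributionHTHSCM}. The paper then sandwiches $Z_{12'}\le X_1\le\Gamma_nZ_{12'}$ on $\{X_1\ge a_nX_2/x\}$. Here $\Gamma_n\to1$ is obtained deterministically from $X_2\ge\underline c_{\mathcal C}\max_{h\in\mathcal C_{12}}\varepsilon_h$ and a compactness argument.

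In 2(b) the block $Y_1=\sum_{h\in B_0}b_{h1}^p\varepsilon_h^p$ plays the same role. The exact identities $X_2^p-F_{12}^pX_1^p=Y_2$ and $X_1^p=Y_1+W_0$, together with $W_0\le K_0Y_2$, give the event sandwich $\{Y_2\le\Delta_nY_1\}\subseteq\{X_2/X_1\le F_{12}+x/a_n^p\}\subseteq\{(1-\eta)Y_2\le\Delta_nY_1\}$. This replaces both your Taylor approximation of the ratio and your separate treatment of shocks from $B_0^c$.

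What the block approach buys is a single conditioning on $X_2$ (resp.\ $Y_2$) and no second-order estimate. Your per-innovation route makes the constant $\sum_{h\in\mathcal H_{12}}F_{h1}^\alpha/S_1$ more transparent, but it needs an inclusion--exclusion step for two comparable private jumps. That step lives at the random scale $a_nX_2$, not at $a_n$, because when $X_2$ is small a private jump of order $u_n$ already suffices.

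The step still closes. For $y>cx\rho_n$, with $\rho_n=u_n/a_n$, Potter's bounds give
$n\,\mathbb P(\varepsilon_h\ge\eta a_ny)\,\mathbb P(\varepsilon_{h'}\ge\eta a_ny)\le K(1+y^{-\alpha-\delta_2})\cdot O(k_n/n)$.
This integrates to $o(1)$ against the law of $X_2$ under Assumption~\ref{ass:neg-moment}, so the two-jump probability is indeed $o(1/n)$. The region $X_2\le cx\rho_n$ is exactly where Assumption~\ref{ass:lower-tail} enters. With that estimate written out, your argument goes through.
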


\begin{proof}
We first show that
\begin{equation}\label{tailratiolimit}
    \mathcal L\!\left(\left.\frac{X_2}{X_1}\,\right|\,X_1>x\right)
    \xrightarrow{d}
    \frac{
      \sum_{h\in An(1)}F_{h1}^{\alpha}
      \delta_{F_{h2}/F_{h1}}
    }{
      \sum_{h\in An(1)}F_{h1}^{\alpha}
    }.
\end{equation}

Let
\[
    g(x):=\frac{1}{\mathbb P(\varepsilon_1>x)}.
\]
Lemma~\ref{distributionHTHSCM} gives
\begin{equation}\label{eq:biv-mrv-proof}
    g(x)\mathbb P\!\left(\frac1x(X_1,X_2)\in A\right)
    \longrightarrow
    \nu_{12}(A)
    :=
    \sum_{h\in An(1)\cup An(2)}
    \int_0^\infty
       \mathbf 1_{\{r(F_{h1},F_{h2})\in A\}}
       \alpha r^{-\alpha-1}\,dr
\end{equation}
for every Borel set $A\subset\mathbb R_+^2\setminus\{(0,0)\}$ that is
separated from the origin and satisfies
$\nu_{12}(\partial A)=0$. Indeed, apply Lemma~\ref{distributionHTHSCM} to the
cylinder set $\widetilde A:=\{\boldsymbol z\in[0,\infty)^d:(z_1,z_2)\in A\}$.
This set is separated from the origin because $A$ is, its boundary is
$\{\boldsymbol z:(z_1,z_2)\in\partial A\}$, and
$\nu_{\boldsymbol X}(\widetilde A)=\nu_{12}(A)$ for every such $A$ because
$(F_{h1},F_{h2})=(0,0)$ for $h\notin An(1)\cup An(2)$.

For $a>0$, define
\[
    S_a:=\{(z_1,z_2)\in\mathbb R_+^2:z_2\leq az_1,\ z_1>1\}.
\]
Taking the boundary relative to $\mathbb R_+^2$, we have
\[
    \partial S_a
    =\{(1,z):0\leq z\leq a\}
      \cup\{(z,az):z\geq1\}
    \subseteq
    \{(z_1,z_2):z_2=az_1\text{ or }z_1=1\}.
\]
Hence $\nu_{12}(\partial S_a)=0$ whenever
\[
    a\notin
    \left\{\frac{F_{h2}}{F_{h1}}:h\in An(1)\right\}.
\]
By conditional probability,
\[
    \mathbb P\!\left(\left.\frac{X_2}{X_1}\leq a\,\right|\,X_1>x\right)
    =
    \frac{
      g(x)\mathbb P\!\left(x^{-1}(X_1,X_2)\in S_a\right)
    }{
      g(x)\mathbb P(X_1/x>1)
    }.
\]
The numerator and denominator converge, respectively, to
\[
    \sum_{h\in An(1)}
       \mathbf 1_{\{F_{h2}/F_{h1}\leq a\}}F_{h1}^{\alpha}
    \qquad\text{and}\qquad
    \sum_{h\in An(1)}F_{h1}^{\alpha}.
\]
This proves \eqref{tailratiolimit}. In particular, taking
$A=\{\boldsymbol z:z_1>1\}$ in Lemma~\ref{distributionHTHSCM},
\begin{equation}\label{eq:X1-tail-proof}
    \mathbb P(X_1>x)
    \sim
    \left(\sum_{h\in An(1)}F_{h1}^{\alpha}\right)
    \ell(x)x^{-\alpha}.
\end{equation}

We now prove part~\ref{consistencyofemre}. Write
\[
    \widehat F_{12}
    =\bigwedge_{r=1}^n
    \left[
       \frac{X_{2r}}{X_{1r}}
       \mathbf 1_{\{X_{1r}>X_{1,(n-k_n)}\}}
       +\infty\,
       \mathbf 1_{\{X_{1r}\leq X_{1,(n-k_n)}\}}
    \right].
\]
Let
\[
    U_1(t):=G_1^{\leftarrow}\!\left(1-\frac1t\right),
\]
where $G_1$ is the distribution function of $X_1$. By
\eqref{eq:X1-tail-proof} and the standard intermediate-order-statistic result
(see the proof of Theorem~4.2 in \citealp{resnick2007heavy}),
\[
    \frac{X_{1,(n-k_n)}}{U_1(n/k_n)}
    \xrightarrow{\mathbb P}1.
\]

For $\tau\in(0,1)$, denote the event
\[
    \left\{
      \left|
      \frac{X_{1,(n-k_n)}}{U_1(n/k_n)}-1
      \right|<\tau
    \right\}
\]
by $B_{\tau}$. Then
\begin{align*}
    \mathbb P(\widehat F_{12}>F_{12}+\epsilon)
    &\leq \mathbb P(B_{\tau}^c)
       +\mathbb P(\widehat F_{12}>F_{12}+\epsilon,B_{\tau})\\
    &\leq \mathbb P(B_{\tau}^c)
       +\mathbb P\!\left(
          \bigwedge_{r:\,
          X_{1r}>(1+\tau)U_1(n/k_n)}
          \frac{X_{2r}}{X_{1r}}>F_{12}+\epsilon
       \right).
\end{align*}
Choose $\epsilon'\in(0,\epsilon)$ such that $F_{12}+\epsilon'$ is not an
atom of the limiting distribution in \eqref{tailratiolimit}. Since replacing
$\epsilon$ by $\epsilon'$ only enlarges the probability on the right, the
i.i.d.\ structure across observations and $1-z\leq e^{-z}$ give
\begin{align*}
    &\mathbb P(\widehat F_{12}>F_{12}+\epsilon,B_{\tau})\\
    &\quad\leq
    \exp\Bigg\{
       -n\mathbb P\!\left(
          X_1>(1+\tau)U_1(n/k_n)
       \right)\\
    &\hspace{36mm}\times
       \mathbb P\!\left(
          \left.\frac{X_2}{X_1}\leq F_{12}+\epsilon'
          \,\right|\,
          X_1>(1+\tau)U_1(n/k_n)
       \right)
    \Bigg\}.
\end{align*}
Because $F_{11}=1$, the limiting distribution in
\eqref{tailratiolimit} has a strictly positive atom at $F_{12}$. Therefore,
\[
    \liminf_{n\to\infty}
    \mathbb P\!\left(
       \left.\frac{X_2}{X_1}\leq F_{12}+\epsilon'
       \,\right|\,
       X_1>(1+\tau)U_1(n/k_n)
    \right)>0.
\]
Moreover, regular variation gives
\[
    n\mathbb P\!\left(
       X_1>(1+\tau)U_1(n/k_n)
    \right)
    \sim k_n(1+\tau)^{-\alpha}\longrightarrow\infty.
\]
Thus
$\mathbb P(\widehat F_{12}>F_{12}+\epsilon)\to0$. If
$1\in An(2)$, Lemma~\ref{lem:ratio-lower-bound} gives
$X_2/X_1\geq F_{12}$ almost surely. If $1\notin An(2)$, then $F_{12}=0$
and $X_2/X_1\geq0$. Hence
\[
    \mathbb P(\widehat F_{12}<F_{12}-\epsilon)=0
\]
in both cases, proving consistency.

We next derive the asymptotic distributions. Put
\[
    u_n:=U_1(n/k_n),
    \qquad
    a_n:=U_1(n),
\]
and, for $c>0$, define
\[
    \widehat F_{12}^{(c)}
    :=
    \bigwedge_{r:\,X_{1r}>cu_n}\frac{X_{2r}}{X_{1r}}.
\]
On $B_{\tau}$,
\begin{equation}\label{eq:random-threshold-sandwich}
    \widehat F_{12}^{(1-\tau)}
    \leq \widehat F_{12}
    \leq \widehat F_{12}^{(1+\tau)}.
\end{equation}
Consequently, for any deterministic $t_n>0$,
\begin{align}
&\mathbb P\!\left(\widehat F_{12}^{(1-\tau)}>t_n\right)
  -\mathbb P(B_{\tau}^c)
\leq \mathbb P(\widehat F_{12}>t_n)\notag\\
&\hspace{28mm}\leq
\mathbb P\!\left(\widehat F_{12}^{(1+\tau)}>t_n\right)
  +\mathbb P(B_{\tau}^c).
\label{eq:probability-random-threshold-sandwich}
\end{align}
For fixed $c>0$, independence across observations gives
\begin{equation}\label{eq:deterministic-threshold-binomial}
\mathbb P\!\left(\widehat F_{12}^{(c)}>t_n\right)
=
\left[
1-\mathbb P\!\left(
    \frac{X_2}{X_1}\leq t_n,\ X_1>cu_n
\right)
\right]^n.
\end{equation}

We will repeatedly use the following direct small-factor estimate. Put
\[
    \rho_n:=\frac{u_n}{a_n},
    \qquad
    \beta:=\frac{\alpha}{1-\delta_1}>\alpha.
\]
Then, for every $C>0$,
\begin{equation}\label{eq:small-factor-estimate}
    k_n\mathbb P(\varepsilon_1\leq C\rho_n)\longrightarrow0.
\end{equation}
Indeed, choose $\zeta\in(0,1/\alpha-1/\beta)$. Since $U_1\in
RV_{1/\alpha}$ by \eqref{eq:X1-tail-proof} and Section~\ref{sec:prelim},
Lemma~\ref{Potter's bounds}, applied to $f=U_1$ with $\zeta\wedge\frac12$ in
place of $\epsilon$, $t=n/k_n\to\infty$ and $x=k_n\geq1$, gives
$U_1(n)/U_1(n/k_n)\geq\frac12k_n^{1/\alpha-\zeta}$. Hence, for all
sufficiently large $n$,
\[
    \rho_n\leq 2k_n^{-1/\alpha+\zeta}.
\]
Let $s$ be a source node of the DAG. Then $X_s=\varepsilon_s$, so
Assumption~\ref{ass:lower-tail} applied at $i=s$ yields
\[
    \mathbb P(\varepsilon_1\leq t)=o(t^\beta),
    \qquad t\downarrow0.
\]
Consequently,
\[
    k_n\mathbb P(\varepsilon_1\leq C\rho_n)
    =o\!\left(k_n\rho_n^\beta\right)
    =o\!\left(k_n^{1-\beta/\alpha+\beta\zeta}\right)
    \longrightarrow0.
\]
In particular, \eqref{eq:small-factor-estimate} also holds with
$\varepsilon_1$ replaced by any positive random variable $Y$ satisfying
$Y\geq c_0\varepsilon_j$ almost surely for some $c_0>0$ and some $j$.

We now prove part~(a). Let $1\notin An(2)$, so that $F_{12}=0$, and split
the ancestors of node~$1$ into the common and the private ones,
\[
    \mathcal C_{12}:=An(1)\cap An(2),
    \qquad
    \mathcal H_{12}=An(1)\setminus An(2).
\]
The set $\mathcal H_{12}$ is nonempty because it contains node~$1$. Write
$\boldsymbol\varepsilon_{12}:=(\varepsilon_h)_{h\in\mathcal C_{12}}$ and
$\boldsymbol\varepsilon_{12'}:=(\varepsilon_h)_{h\in\mathcal H_{12}}$, let
$M_{12}$ and $M_{12'}$ be their maxima ($M_{12}:=0$ if
$\mathcal C_{12}=\varnothing$), and order the coordinates of $F_{An(1)}$ so
that $X_1=F_{An(1)}(\boldsymbol\varepsilon_{12},\boldsymbol\varepsilon_{12'})$.
Define the \emph{private part} of $X_1$,
\[
    Z_{12'}:=F_{An(1)}\bigl(\mathbf 0,\boldsymbol\varepsilon_{12'}\bigr),
\]
the value that $X_1$ would take if every common-ancestor noise were switched
off. We use the constants
\[
    \overline c_1:=F_{An(1)}(1,\ldots,1),\qquad
    \underline c_{\mathcal C}:=\min_{h\in\mathcal C_{12}}F_{h2},\qquad
    \underline c_{\mathcal H}:=\min_{h\in\mathcal H_{12}}F_{h1},
\]
all positive by \eqref{eq:AIR-support-diagonal}. Monotonicity and
homogeneity give
\begin{equation}\label{eq:nonanc-basic-bounds}
    X_1\leq\overline c_1\,(M_{12}\vee M_{12'}),\qquad
    X_2\geq\underline c_{\mathcal C}\,M_{12},\qquad
    X_2\geq\varepsilon_2,\qquad
    Z_{12'}\leq X_1 .
\end{equation}

The variable $Z_{12'}$ is a function of
$\boldsymbol\varepsilon_{12'}$ only, and
$\mathcal H_{12}\cap An(2)=\varnothing$, so $Z_{12'}$ is independent of
$X_2$. The map $\boldsymbol y\mapsto F_{An(1)}(\mathbf 0,\boldsymbol y)$ is
continuous and positively homogeneous with value $F_{h1}$ at $\mathbf e_h$,
so the argument in the proof of Lemma~\ref{distributionHTHSCM},
with this map in place of $H$,
gives $\mathbb P(Z_{12'}>t)\sim\sum_{h\in\mathcal H_{12}}F_{h1}^{\alpha}\,
\mathbb P(\varepsilon_1>t)$. Together with \eqref{eq:X1-tail-proof} and
$n\mathbb P(X_1>a_n)=1$, this yields $\mathbb P(Z_{12'}>\cdot)\in RV_{-\alpha}$
and
\begin{equation}\label{eq:private-part-tail}
    n\mathbb P\bigl(Z_{12'}>a_nz_n\bigr)\longrightarrow
    \frac{\sum_{h\in\mathcal H_{12}}F_{h1}^{\alpha}}{S_1}\,z^{-\alpha}
    \qquad\text{whenever } z_n\to z\in(0,\infty),
\end{equation}
where the version with $z_n\to z$ follows from monotonicity and continuity
of the limit. The same limit holds with $\geq$ in place of $>$.

By \eqref{eq:probability-random-threshold-sandwich} and
\eqref{eq:deterministic-threshold-binomial} with $t_n=x/a_n$, it suffices to
show that, for every fixed $c>0$ and $x>0$,
\begin{equation}\label{eq:nonanc-target}
    n\mathbb P\!\left(\frac{X_2}{X_1}\leq\frac{x}{a_n},\ X_1>cu_n\right)
    \longrightarrow Ax^{\alpha}.
\end{equation}
Since $X_2/X_1\leq x/a_n$ means $X_1\geq a_nX_2/x$, and $a_nX_2/x\leq cu_n$
exactly when $X_2\leq cx\rho_n$, the left-hand side of
\eqref{eq:nonanc-target} lies between $J_n$ and $I_n+J_n$, where
\[
    I_n:=n\mathbb P\bigl(X_1>cu_n,\ X_2\leq cx\rho_n\bigr),
    \qquad
    J_n:=n\mathbb P\bigl(X_1\geq a_nX_2/x,\ X_2>cx\rho_n\bigr);
\]
on the event in $J_n$, the condition $X_1>cu_n$ holds automatically. Recall
that $\rho_n=u_n/a_n\to0$ and $u_n\to\infty$.

\emph{Step~1: $I_n\to0$.} By \eqref{eq:nonanc-basic-bounds},
\[
    I_n\leq n\mathbb P\bigl(\overline c_1M_{12}>cu_n,\ \underline c_{\mathcal C}M_{12}\leq cx\rho_n\bigr)
       +n\mathbb P\bigl(\overline c_1M_{12'}>cu_n,\ X_2\leq cx\rho_n\bigr).
\]
The first event is empty once $a_n>\overline c_1x/\underline c_{\mathcal C}$,
because it requires
$cu_n/\overline c_1<M_{12}\leq cx\rho_n/\underline c_{\mathcal C}
=cxu_n/(\underline c_{\mathcal C}a_n)$; it is empty for every $n$ if
$\mathcal C_{12}=\varnothing$. For the second term, $M_{12'}$ and $X_2$ are
independent, $n\mathbb P(M_{12'}>cu_n/\overline c_1)\sim
|\mathcal H_{12}|\,(c/\overline c_1)^{-\alpha}k_n/S_1=O(k_n)$, and
$\mathbb P(X_2\leq cx\rho_n)=o(1/k_n)$ by \eqref{eq:small-factor-estimate},
since $X_2\geq\varepsilon_2$. Hence $I_n\to0$.

\emph{Step~2: lower bound for $J_n$.} Fix $\epsilon\in(0,x/2]$. For all large
$n$, $cx\rho_n<\epsilon$, and $Z_{12'}\leq X_1$ gives
\[
    J_n\geq n\mathbb P\bigl(Z_{12'}\geq a_nX_2/x,\ X_2>\epsilon\bigr)
    =\int_{(\epsilon,\infty)}n\mathbb P\bigl(Z_{12'}\geq a_ny/x\bigr)\,F_{X_2}(dy).
\]
The integrand converges pointwise by \eqref{eq:private-part-tail} and is
bounded by $\sup_nn\mathbb P(Z_{12'}\geq a_n\epsilon/x)<\infty$, so dominated
convergence gives
$\liminf_nJ_n\geq\frac{\sum_{\mathcal H_{12}}F_{h1}^{\alpha}}{S_1}\,
x^{\alpha}\,\mathbb E\bigl[X_2^{-\alpha}\mathbf 1_{\{X_2>\epsilon\}}\bigr]$.
Letting $\epsilon\downarrow0$ and using monotone convergence,
$\liminf_nJ_n\geq Ax^{\alpha}$.

\emph{Step~3: comparison of $X_1$ with its private part.}
We show that there are deterministic $\Gamma_n\geq1$ with $\Gamma_n\to1$ such
that, for all large $n$, almost surely,
\begin{equation}\label{eq:nonanc-comparison}
    X_1\leq\Gamma_n\,Z_{12'}\qquad\text{on }\{X_1\geq a_nX_2/x\}.
\end{equation}
If $\mathcal C_{12}=\varnothing$, then $X_1=Z_{12'}$ and $\Gamma_n:=1$.
Otherwise put $\gamma_n:=\overline c_1x/(\underline c_{\mathcal C}a_n)$ and take
$n$ so large that $\gamma_n<1$. Work on $\{X_1\geq a_nX_2/x\}\cap\{X_2>0\}$;
the second event has probability one because $X_2$ has a continuous
distribution. By \eqref{eq:nonanc-basic-bounds},
\[
    \underline c_{\mathcal C}M_{12}\leq X_2\leq\frac{x}{a_n}X_1
    \leq\frac{x\,\overline c_1}{a_n}\,(M_{12}\vee M_{12'}).
\]
If $M_{12}\geq M_{12'}$, this reads
$\underline c_{\mathcal C}M_{12}\leq\underline c_{\mathcal C}\gamma_nM_{12}$,
which forces $M_{12}=M_{12'}=0$, hence $X_1=0$ and $X_2=0$; this is excluded.
Therefore $M_{12'}>M_{12}$, and the same display gives
\begin{equation}\label{eq:nonanc-common-small}
    M_{12}\leq\gamma_n\,M_{12'}.
\end{equation}
Let $\mathcal S:=\{\boldsymbol v\in[0,1]^{|\mathcal H_{12}|}:\max_hv_h=1\}$ and
\[
    \Gamma_n:=\sup_{\boldsymbol v\in\mathcal S}
    \frac{F_{An(1)}(\gamma_n\mathbf 1,\boldsymbol v)}{F_{An(1)}(\mathbf 0,\boldsymbol v)}.
\]
For $\boldsymbol v\in\mathcal S$ some coordinate equals one, so monotonicity
gives $F_{An(1)}(\mathbf 0,\boldsymbol v)\geq\underline c_{\mathcal H}>0$;
hence $\Gamma_n$ is finite, and $\Gamma_n\geq1$ by monotonicity. By
homogeneity, with $\boldsymbol v:=\boldsymbol\varepsilon_{12'}/M_{12'}\in\mathcal S$
and $\boldsymbol w:=\boldsymbol\varepsilon_{12}/M_{12'}\in[0,\gamma_n]^{|\mathcal C_{12}|}$
by \eqref{eq:nonanc-common-small},
\[
    \frac{X_1}{Z_{12'}}
    =\frac{F_{An(1)}(\boldsymbol w,\boldsymbol v)}{F_{An(1)}(\mathbf 0,\boldsymbol v)}
    \leq\frac{F_{An(1)}(\gamma_n\mathbf 1,\boldsymbol v)}{F_{An(1)}(\mathbf 0,\boldsymbol v)}
    \leq\Gamma_n ,
\]
which is \eqref{eq:nonanc-comparison}. It remains to show $\Gamma_n\to1$.
Suppose not. Then there are $\epsilon_0>0$, indices $n_1<n_2<\cdots$, and
$\boldsymbol v_k\in\mathcal S$ with
$F_{An(1)}(\gamma_{n_k}\mathbf 1,\boldsymbol v_k)\geq(1+\epsilon_0)
F_{An(1)}(\mathbf 0,\boldsymbol v_k)$ for all $k$. The sequence
$(\boldsymbol v_k)$ need not converge, but $\mathcal S$ is closed and bounded,
so by the Bolzano--Weierstrass theorem a further subsequence
$\boldsymbol v_{k_j}$ converges to some $\boldsymbol v_*\in\mathcal S$. Along it
$\gamma_{n_{k_j}}\to0$, and continuity of $F_{An(1)}$ on both sides gives
$F_{An(1)}(\mathbf 0,\boldsymbol v_*)\geq(1+\epsilon_0)F_{An(1)}(\mathbf 0,\boldsymbol v_*)$,
that is, $F_{An(1)}(\mathbf 0,\boldsymbol v_*)\leq0$. This contradicts
$F_{An(1)}(\mathbf 0,\boldsymbol v_*)\geq\underline c_{\mathcal H}>0$.

\emph{Step~4: upper bound for $J_n$.} By \eqref{eq:nonanc-comparison} and the
independence of $Z_{12'}$ and $X_2$,
\[
    J_n\leq n\mathbb P\bigl(\Gamma_nZ_{12'}\geq a_nX_2/x,\ X_2>cx\rho_n\bigr)
    =\int_{(cx\rho_n,\infty)}n\mathbb P\Bigl(Z_{12'}\geq\frac{a_ny}{x\Gamma_n}\Bigr)F_{X_2}(dy).
\]
Fix $\epsilon\in(0,x/2]$ and split the integral at $\epsilon$. On
$(\epsilon,\infty)$, the integrand converges pointwise to
$\frac{\sum_{\mathcal H_{12}}F_{h1}^{\alpha}}{S_1}x^{\alpha}y^{-\alpha}$ by
\eqref{eq:private-part-tail}, because $x\Gamma_n\to x$, and it is bounded by
$\sup_nn\mathbb P(Z_{12'}\geq a_n\epsilon/(2x))<\infty$ once $\Gamma_n\leq2$.
Dominated convergence therefore gives the limit
$\frac{\sum_{\mathcal H_{12}}F_{h1}^{\alpha}}{S_1}x^{\alpha}\,
\mathbb E[X_2^{-\alpha}\mathbf 1_{\{X_2>\epsilon\}}]$.

On $(cx\rho_n,\epsilon]$ we need Potter's bounds and
Assumption~\ref{ass:neg-moment}. Since
$\mathbb P(Z_{12'}>\cdot)\in RV_{-\alpha}$, Lemma~\ref{Potter's bounds} with
$\delta_2\wedge\frac12$ in place of $\epsilon$ and $x=t/t'$ gives $t_0$ such that
$\mathbb P(Z_{12'}>t')\leq2\,(t'/t)^{-\alpha-\delta_2}\,\mathbb P(Z_{12'}>t)$
whenever $t_0\leq t'\leq t$. For $y\in(cx\rho_n,\epsilon]$ and $\Gamma_n\leq2$,
put $t':=a_ny/(4x)$. Then $a_ny/(x\Gamma_n)\geq2t'>t'$ and
$cu_n/4\leq t'\leq a_n$, so $t'\geq t_0$ for all large $n$. Hence
\[
    n\mathbb P\Bigl(Z_{12'}\geq\frac{a_ny}{x\Gamma_n}\Bigr)
    \leq n\mathbb P(Z_{12'}>t')
    \leq2\Bigl(\frac{y}{4x}\Bigr)^{-\alpha-\delta_2}n\mathbb P(Z_{12'}>a_n)
    \leq K\,y^{-\alpha-\delta_2}
\]
for a constant $K$ that does not depend on $n$, $y$ or $\epsilon$.
Consequently,
\[
    \limsup_{n\to\infty}J_n
    \leq\frac{\sum_{\mathcal H_{12}}F_{h1}^{\alpha}}{S_1}x^{\alpha}\,
    \mathbb E\bigl[X_2^{-\alpha}\mathbf 1_{\{X_2>\epsilon\}}\bigr]
    +K\,\mathbb E\bigl[X_2^{-\alpha-\delta_2}\mathbf 1_{\{X_2\leq\epsilon\}}\bigr].
\]
By Assumption~\ref{ass:neg-moment}, $\mathbb E[X_2^{-\alpha-\delta_2}]<\infty$, so
the second term tends to $0$ as $\epsilon\downarrow0$, while the first tends
to $Ax^{\alpha}$. Hence $\limsup_nJ_n\leq Ax^{\alpha}$.

Steps~1, 2 and~4 give \eqref{eq:nonanc-target} for every
$c>0$. By \eqref{eq:deterministic-threshold-binomial},
$\mathbb P(a_n\widehat F_{12}^{(c)}>x)\to e^{-Ax^{\alpha}}$, and the limit does
not depend on $c$; \eqref{eq:probability-random-threshold-sandwich} with
$c=1\pm\tau$ then proves \eqref{eq:nonancestor-weibull-limit} for $x>0$. For
$x=0$, note that $\widehat F_{12}>0$ almost surely because $X_2$ has a
continuous distribution. Finally, $A>0$ because $1\in\mathcal H_{12}$, and
$A<\infty$ because
$\mathbb E[X_2^{-\alpha}]\leq1+\mathbb E[X_2^{-\alpha-\delta_2}]<\infty$.

Next suppose that $1\in an(2)$ and that the model is an
$\ell_p$-aggregation model with $0<p<\infty$. Again write
$b_{hi}:=F_{hi}$. Since $An(1)\subseteq An(2)$,
\[
    X_2^p
    =\sum_{h\in An(1)}b_{h2}^p\varepsilon_h^p
     +\sum_{h\in An(2)\setminus An(1)}b_{h2}^p\varepsilon_h^p.
\]
For $x>0$, Taylor expansion at $F_{12}>0$ gives
\begin{equation}\label{eq:finite-p-taylor}
    \left(F_{12}+\frac{x}{a_n^p}\right)^p
    =F_{12}^p+\Delta_n,
    \qquad
    \Delta_n=\frac{x'+o(1)}{a_n^p},
    \quad x':=pF_{12}^{p-1}x.
\end{equation}
Retain $B_0$, $B_0^c$, $d_h$, and $Y_2$ from the proposition, and define
\[
    Y_1:=\sum_{h\in B_0}b_{h1}^p\varepsilon_h^p,
    \qquad
    W_0:=\sum_{h\in B_0^c}b_{h1}^p\varepsilon_h^p.
\]
Lemma~\ref{lem:AIR-transitivity} gives $d_h\geq0$, so $d_h>0$ for
$h\in B_0^c$ by the definition of $B_0$; moreover $1\in B_0$, so $Y_1$ is
nondegenerate. Also, $Y_1$ and $Y_2$ are independent, because they depend on
disjoint sets of innovations. Since $X_2^p-F_{12}^pX_1^p=Y_2$ and
$X_1^p=Y_1+W_0$, \eqref{eq:finite-p-taylor} gives
\begin{equation}\label{eq:ancestor-ratio-equivalence}
\frac{X_2}{X_1}\leq F_{12}+\frac{x}{a_n^p}
\quad\Longleftrightarrow\quad
Y_2-\Delta_nW_0\leq\Delta_nY_1.
\end{equation}
If $B_0^c\neq\varnothing$, then
\[
    W_0\leq K_0Y_2,
    \qquad
    K_0:=\max_{h\in B_0^c}\frac{b_{h1}^p}{d_h}<\infty;
\]
if $B_0^c=\varnothing$, then put $W_0=0$ and $K_0:=0$. Therefore, for every
$\eta\in(0,1)$ and all sufficiently large $n$,
\[
    (1-\eta)Y_2
    \leq Y_2-\Delta_nW_0
    \leq Y_2.
\]
Consequently,
\begin{equation}\label{eq:ancestor-event-sandwich}
\{Y_2\leq\Delta_nY_1\}
\subseteq
\left\{
\frac{X_2}{X_1}\leq F_{12}+\frac{x}{a_n^p}
\right\}
\subseteq
\{(1-\eta)Y_2\leq\Delta_nY_1\}.
\end{equation}

The argument in the proof of
Lemma~\ref{distributionHTHSCM}, applied to the continuous, positively homogeneous
map $\boldsymbol y\mapsto(\sum_{h\in B_0}b_{h1}^p|y_h|^p)^{1/p}$, gives
$\mathbb P(Y_1^{1/p}>t)\sim\sum_{h\in B_0}b_{h1}^{\alpha}\,\mathbb P(\varepsilon_1>t)$.
Together with \eqref{eq:X1-tail-proof} and $n\mathbb P(X_1>a_n)=1$, this gives
\begin{equation}\label{eq:Y1-tail-exact}
    n\mathbb P(Y_1>a_n^pz)
    \longrightarrow
    C_Bz^{-\alpha/p},
    \qquad
    C_B:=
    \frac{\sum_{h\in B_0}b_{h1}^{\alpha}}
         {\sum_{h\in An(1)}b_{h1}^{\alpha}}.
\end{equation}
Furthermore, $2\in An(2)\setminus An(1)$ and $b_{22}=1$, so
$Y_2\geq\varepsilon_2^p>0$ almost surely and, by Assumption~\ref{ass:neg-moment},
\[
    \mathbb E[Y_2^{-(\alpha+\delta_2)/p}]<\infty.
\]
Writing $s_n:=a_n^p\Delta_n\to x'$, conditioning on $Y_2$ gives
\begin{align*}
n\mathbb P(Y_2\leq\Delta_nY_1)
&=
n\int_0^\infty
\mathbb P\!\left(Y_1>a_n^p\frac{y}{s_n}\right)F_{Y_2}(dy).
\end{align*}
By \eqref{eq:Y1-tail-exact}, the integrand converges pointwise to
$C_B(x')^{\alpha/p}y^{-\alpha/p}$. Equivalently, this is Breiman's
lemma \citep[Lemma~1.4.3]{kulik2020heavy} applied to $Y_1^{1/p}$ and
$Y_2^{-1/p}$. The displayed negative-moment condition verifies its moment
hypothesis. Consequently,
\begin{equation}\label{eq:Y1Y2-exact-limit}
\begin{split}
n\mathbb P(Y_2\leq\Delta_nY_1)
&\longrightarrow
C_B(x')^{\alpha/p}\mathbb E[Y_2^{-\alpha/p}]\\
&=Bx^{\alpha/p}.
\end{split}
\end{equation}
We next impose $X_1>cu_n$ directly. For $d\in(0,1]$, let
\[
    E_{n,d}:=\{dY_2\leq\Delta_nY_1\},
    \qquad
    T:=Y_2^{1/p},
    \qquad
    b_{n,d}:=c\left(\frac{2x'}{d}\right)^{1/p}\rho_n.
\]
Since $a_n^p\Delta_n\to x'$, on $E_{n,d}\cap\{T>b_{n,d}\}$ we have,
for all sufficiently large $n$,
\[
    Y_1\geq\frac{dT^p}{\Delta_n}>(cu_n)^p,
\]
and hence $X_1>cu_n$. On the other hand, if $T\leq b_{n,d}$, then
$W_0\leq K_0T^p=o(u_n^p)$. Thus, for all sufficiently large $n$,
$X_1>cu_n$ implies $Y_1>(cu_n)^p/2$. Using the independence of $Y_1$
and $T$, we obtain
\begin{align*}
n\mathbb P(E_{n,d},\ T\leq b_{n,d},\ X_1>cu_n)
&\leq
n\mathbb P\!\left(Y_1>\frac{(cu_n)^p}{2}\right)
\mathbb P(T\leq b_{n,d})\\
&=o(1).
\end{align*}
Indeed, the first probability is $O(k_n/n)$ by
\eqref{eq:Y1-tail-exact}, whereas the second is $o(1/k_n)$ by
\eqref{eq:small-factor-estimate}, since $T\geq\varepsilon_2$.
It remains to evaluate $n\mathbb P(E_{n,d},\ T>b_{n,d})$, which equals
\[
    \int_{(b_{n,d}^p,\infty)}
    n\mathbb P\!\left(Y_1\geq a_n^p\frac{dy}{s_n}\right)F_{Y_2}(dy).
\]
On $(\epsilon,\infty)$ for fixed $\epsilon>0$, dominated convergence applies as
in Step~2 of part~(a). On $(b_{n,d}^p,\epsilon]$, the arguments
$a_n^pdy/s_n$ exceed $(cu_n)^p\to\infty$ for all large $n$, so
Lemma~\ref{Potter's bounds} bounds the integrand by $Ky^{-(\alpha+\delta_2)/p}$
exactly as in Step~4 of part~(a), and the displayed negative-moment condition
makes this bound integrable with an integral that vanishes as
$\epsilon\downarrow0$. Hence, by the same conditioning argument that proves
\eqref{eq:Y1Y2-exact-limit}, with $dY_2$ in place of $Y_2$,
\begin{equation}\label{eq:ancestor-direct-threshold-limit}
    n\mathbb P(E_{n,d},\ X_1>cu_n)
    \longrightarrow d^{-\alpha/p}Bx^{\alpha/p}.
\end{equation}

Applying \eqref{eq:ancestor-direct-threshold-limit} with $d=1$ and
$d=1-\eta$ in \eqref{eq:ancestor-event-sandwich}, and then letting
$\eta\downarrow0$, gives, for every fixed $c>0$,
\begin{equation}\label{eq:ancestor-single-observation-exact}
\begin{split}
&n\mathbb P\!\left(
    \frac{X_2}{X_1}\leq F_{12}+\frac{x}{a_n^p},\ X_1>cu_n
\right)\\
&\hspace{15mm}\longrightarrow Bx^{\alpha/p}.
\end{split}
\end{equation}
Combining \eqref{eq:ancestor-single-observation-exact} with
\eqref{eq:deterministic-threshold-binomial} gives
\[
\mathbb P\!\left(
    a_n^p(\widehat F_{12}^{(c)}-F_{12})>x
\right)
\longrightarrow e^{-Bx^{\alpha/p}}.
\]
Finally, \eqref{eq:probability-random-threshold-sandwich}, applied with
$t_n=F_{12}+x/a_n^p$, proves \eqref{eq:ancestor-weibull-limit} for $x>0$.
For $x=0$, both sides vanish: $X_2^p-F_{12}^pX_1^p=Y_2\geq\varepsilon_2^p>0$
almost surely, so $X_2/X_1>F_{12}$ almost surely and
$\mathbb P(\widehat F_{12}=F_{12})=0$. Finally, $B>0$ because $1\in B_0$, and
$B<\infty$ because
$\mathbb E[Y_2^{-\alpha/p}]\leq1+\mathbb E[Y_2^{-(\alpha+\delta_2)/p}]<\infty$.

Finally, consider the case \(p=\infty\), corresponding to a recursive
max-linear model. Suppose that \(1\in an(2)\), and write
\(b_{hi}=F_{hi}\). In particular, \(b_{12}=F_{12}\). Since
\(An(1)\subseteq An(2)\),
\[
X_2
=
\bigvee_{h\in An(2)}b_{h2}\varepsilon_h
=
\left(
\bigvee_{h\in An(1)}b_{h2}\varepsilon_h
\right)
\vee
\left(
\bigvee_{h\in An(2)\setminus An(1)}
b_{h2}\varepsilon_h
\right).
\]

We show that
\[
\mathbb P\left(\widehat F_{12}\neq F_{12}\right)
\longrightarrow 0.
\]
As in the consistency proof, let
\[
u_n^+=(1+\tau)U_1\left(\frac{n}{k_n}\right).
\]
Since \(\mathbb P(B_{\tau}^c)\to0\), it suffices to prove that
\[
n\mathbb P\left(
\frac{X_2}{X_1}=F_{12},\,
X_1>u_n^+
\right)
\longrightarrow\infty.
\]

Observe that
\begin{align*}
&\mathbb P\left(
\frac{X_2}{X_1}=F_{12},\,
X_1>u_n^+
\right)\\
&\qquad\geq
\mathbb P\left(
\frac{X_2}{X_1}=F_{12},\,
X_1>u_n^+,\,
X_1=\varepsilon_1,\,
X_2=b_{12}\varepsilon_1
\right).
\end{align*}
Define
\[
Z
:=
\left(
\bigvee_{h\in An(2)\setminus\{1\}}
\frac{b_{h2}}{b_{12}}\varepsilon_h
\right)
\vee
\left(
\bigvee_{h\in an(1)}
b_{h1}\varepsilon_h
\right).
\]
Since \(Z\) does not involve \(\varepsilon_1\), it is independent of
\(\varepsilon_1\). Moreover,
\[
\{\varepsilon_1>u_n^+,\ \varepsilon_1>Z\}
\subseteq
\left\{
\frac{X_2}{X_1}=F_{12},\,
X_1>u_n^+,\,
X_1=\varepsilon_1,\,
X_2=b_{12}\varepsilon_1
\right\}.
\]
Indeed, \(\varepsilon_1>Z\) implies
\[
\varepsilon_1>
\bigvee_{h\in an(1)}b_{h1}\varepsilon_h,
\]
and therefore \(X_1=\varepsilon_1\). It also implies
\[
b_{12}\varepsilon_1>
\bigvee_{h\in An(2)\setminus\{1\}}b_{h2}\varepsilon_h,
\]
and hence \(X_2=b_{12}\varepsilon_1\). Consequently,
\[
\frac{X_2}{X_1}=b_{12}=F_{12}.
\]

It follows that
\begin{align*}
n\mathbb P\left(
\frac{X_2}{X_1}=F_{12},\,
X_1>u_n^+
\right)
&\geq
n\mathbb P\left(
\varepsilon_1>\max\{u_n^+,Z\}
\right)\\
&\geq
n\mathbb P(\varepsilon_1>u_n^+)
   \mathbb P(Z\leq u_n^+),
\end{align*}
where the last inequality follows from the independence of
\(\varepsilon_1\) and \(Z\).

Let
\[
C_1:=\sum_{h\in An(1)}b_{h1}^{\alpha}.
\]
Since, by \eqref{eq:X1-tail-proof},
\[
\mathbb P(X_1>x)
\sim C_1\mathbb P(\varepsilon_1>x),
\]
regular variation gives
\[
n\mathbb P(\varepsilon_1>u_n^+)
\sim
\frac{(1+\tau)^{-\alpha}}{C_1}\,k_n
\longrightarrow\infty.
\]
Furthermore, \(u_n^+\to\infty\), and \(Z<\infty\) almost surely, so
\[
\mathbb P(Z\leq u_n^+)\longrightarrow1.
\]
Therefore,
\[
n\mathbb P\left(
\frac{X_2}{X_1}=F_{12},\,
X_1>u_n^+
\right)
\longrightarrow\infty.
\]

Finally, on \(B_{\tau}\), every observation satisfying \(X_1>u_n^+\)
belongs to \(D_n\). Since \(X_2/X_1\geq F_{12}\) almost surely
(Lemma~\ref{lem:ratio-lower-bound}),
\begin{align*}
\mathbb P(\widehat F_{12}\neq F_{12})
&\leq
\mathbb P(B_{\tau}^c)
+
\left[
1-
\mathbb P\left(
\frac{X_2}{X_1}=F_{12},\,
X_1>u_n^+
\right)
\right]^n\\
&\leq
\mathbb P(B_{\tau}^c)
+
\exp\left\{
-n\mathbb P\left(
\frac{X_2}{X_1}=F_{12},\,
X_1>u_n^+
\right)
\right\}\\
&\longrightarrow0.
\end{align*}
Thus,
\[
\mathbb P\left(\widehat F_{12}=F_{12}\right)\longrightarrow1.
\]
This part of the argument uses only $k_n\to\infty$ and $k_n/n\to0$, and not
Assumptions~\ref{ass:lower-tail} and~\ref{ass:neg-moment}.
\end{proof}

\begin{remark}\label{rem:rates}
    Suppose that
    $\mathbb P(X_1>x)\sim c_1x^{-\alpha}$, so that $a_n\sim(c_1n)^{1/\alpha}$.
    Proposition~\ref{prop:EMRE-consistency} then gives three regimes. If
    $1\notin An(2)$, then $\widehat F_{12}$ is of exact order $n^{-1/\alpha}$.
    If $1\in an(2)$ in an $\ell_p$-aggregation model with $p<\infty$, then
    $\widehat F_{12}-F_{12}$ is of exact order $n^{-p/\alpha}$. In the
    max-linear model, $\widehat F_{12}=F_{12}$ with probability tending to
    one. In particular, when $0<p<1$ the ancestral error decays more slowly
    than a non-ancestral estimate, and when $p>1$ it decays faster. None of
    these limit laws depends on the choice of $k_n$, provided that
    $k_n\to\infty$ and $k_n/n\to0$.
\end{remark}

\section{Structure Learning}
\label{sec:structure}

\subsection{Causal Order}
Given a causal DAG with node set $V = \{1, \dots, d\}$, a causal order (or
topological order) is a permutation $\pi: V \to V$ satisfying
$u \in an(v) \implies \pi(u) < \pi(v)$. For a causal DAG, there exists at
least one causal order, which may not be unique. Even though a causal order
does not fully identify a DAG, it provides crucial information on causal
relations and reduces the search space for further DAG discovery.

We use the EASE algorithm of \citet[Section~3.1]{gnecco2021causal}, with their
causal tail coefficient \citep[Definition~1]{gnecco2021causal} replaced by
$\widehat F_{ij}$. At the last step, where $V_d$ is a singleton, a maximum
over the empty set is taken to be $0$. See Algorithm~\ref{alg:ease} for more
details.

\renewcommand{\algorithmicrequire}{\textbf{Input:}}
\renewcommand{\algorithmicensure}{\textbf{Output:}}

\begin{algorithm}
\caption{EASE algorithm with estimated AIRs}
\label{alg:ease}
\begin{algorithmic}[1]
\Require
$(\hat{F}_{uv})_{u,v \in V, u \neq v}$ associated with node set $V = \{1, \dots, d\}$
\Ensure Causal order $\pi : V \to V$
\State $V_1 \leftarrow V$
\For{$s = 1$ \textbf{to} $d$}
    \For{\textbf{all} $v \in V_s$}
        \State $M_v^{(s)} \leftarrow \max_{u \in V_s \setminus \{v\}} \hat{F}_{uv}$
    \EndFor
    \State $v_s \in \arg \min_{v \in V_s} M_v^{(s)}$
    \State $\pi(v_s) \leftarrow s$
    \State $V_{s+1} \leftarrow V_s \setminus \{v_s\}$
\EndFor
\State \textbf{return} permutation $\pi$
\end{algorithmic}
\end{algorithm}

\begin{proposition}\label{prop:causal_order_consistency}
Under the assumptions of Part~\ref{consistencyofemre} of
Proposition~\ref{prop:EMRE-consistency}, let $\hat{\pi}$ denote the causal
order estimated by providing $(\hat{F}_{ij})_{i,j \in V, i \neq j}$ as input
to Algorithm~\ref{alg:ease}, and let $\Pi$ be the set of all valid causal
orders for the underlying DAG $\mathcal{D}$. Then, as $n \to \infty$,
\begin{equation}
    \mathbb{P}( \hat{\pi} \notin \Pi) \leq |V|^2 \bigvee_{\substack{i,j \in V \\ i \neq j}} \mathbb{P}\left(|\hat{F}_{ij} - F_{ij}| \geq \frac{m_r}{2}\right) \to 0,
\end{equation}
where $m_r$ is the smallest AIR of an ancestral pair,
\begin{equation}
    m_r = \bigwedge_{\substack{i,j \in V \\ i \in {an}(j)}} F_{ij} > 0,
\end{equation}
with the convention $m_r:=\infty$ if $\mathcal D$ has no edges, in which case
every permutation is a valid causal order.
\end{proposition}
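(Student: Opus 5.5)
The plan is a deterministic reduction followed by a union bound. Let $\mathcal E_n$ be the event that $|\widehat F_{ij}-F_{ij}|<m_r/2$ for every ordered pair $i\neq j$. I will show that on $\mathcal E_n$ Algorithm~\ref{alg:ease} always returns a valid causal order. Then
\[
\mathbb P(\hat\pi\notin\Pi)\leq\mathbb P(\mathcal E_n^c)\leq\sum_{i\neq j}\mathbb P\bigl(|\widehat F_{ij}-F_{ij}|\geq m_r/2\bigr),
\]
and this sum has at most $|V|^2$ terms, so it is bounded by $|V|^2$ times the largest of them. Each term tends to zero by part~\ref{consistencyofemre} of Proposition~\ref{prop:EMRE-consistency}, and there are finitely many terms, so the right-hand side tends to zero. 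If $\mathcal D$ has no edges, then $\Pi$ is the set of all permutations and there is nothing to prove.

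For the deterministic step, note that on $\mathcal E_n$ non-ancestral and ancestral pairs are separated. If $i\notin an(j)$, then $F_{ij}=0$, so $\widehat F_{ij}<m_r/2$. If $i\in an(j)$, then $F_{ij}\geq m_r$, so $\widehat F_{ij}>m_r/2$.

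Now fix a step $s$ with remaining set $V_s$. Call $v\in V_s$ \emph{minimal} if no $u\in V_s\setminus\{v\}$ is an ancestor of $v$. Since the ancestral relation restricted to $V_s$ is a strict partial order (by acyclicity), at least one minimal node exists.
\begin{itemize}
\item For a minimal $v$, every $u\in V_s\setminus\{v\}$ satisfies $u\notin an(v)$. Hence $M_v^{(s)}<m_r/2$, using the convention that the maximum over the empty set is $0$.
\item For a non-minimal $v$, some $u\in V_s\setminus\{v\}$ lies in $an(v)$. Hence $M_v^{(s)}>m_r/2$.
\end{itemize}
So the argmin $v_s$ is always minimal in $V_s$, regardless of how ties are broken.

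By induction on $s$, whenever $u\in an(v)$, the node $v$ is not chosen while $u$ is still in the remaining set. Therefore $u$ is removed first, and $\hat\pi(u)<\hat\pi(v)$, so $\hat\pi\in\Pi$.

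The main point requiring care is the strictness of the inequalities on $\mathcal E_n$. Defining the bad event with ``$\geq m_r/2$'' makes the good event strict, which rules out ties between minimal and non-minimal nodes. Beyond that, the argument is elementary; all the probabilistic content is supplied by the consistency result, which needs only $k_n\to\infty$ and $k_n/n\to0$.
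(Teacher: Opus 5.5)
Your proposal is correct and follows the paper's proof essentially step for step. It uses the same good event with strict inequalities, the same separation of $M_v^{(s)}$ below $m_r/2$ for nodes with no remaining ancestor and above $m_r/2$ for nodes with one, and the same union bound over the $d(d-1)\le|V|^2$ ordered pairs combined with Part~1 of Proposition~\ref{prop:EMRE-consistency}.
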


\begin{proof}
The argument follows the proof of Proposition~2 in \citet{gnecco2021causal}.
Let $\mathcal A_n$ be the event that $|\widehat F_{ij}-F_{ij}|<m_r/2$ for all
$i\neq j$, and consider any step $s$ of Algorithm~\ref{alg:ease} on
$\mathcal A_n$. If $v\in V_s$ has no ancestor in $V_s$, then $F_{uv}=0$ for
every $u\in V_s\setminus\{v\}$, and hence $M_v^{(s)}<m_r/2$. If $v$ has an
ancestor $u\in V_s$, then
$M_v^{(s)}\geq\widehat F_{uv}>F_{uv}-m_r/2\geq m_r/2$. Since $V_s$ always
contains a node with no ancestor in $V_s$, the selected node $v_s$ has no
ancestor among the remaining nodes, so every ancestor of $v_s$ has been
selected earlier, and $\widehat\pi\in\Pi$. By the union bound,
$\mathbb P(\widehat\pi\notin\Pi)\leq\mathbb P(\mathcal A_n^c)
\leq d(d-1)\max_{i\neq j}\mathbb P(|\widehat F_{ij}-F_{ij}|\geq m_r/2)$, which
tends to zero by Part~\ref{consistencyofemre} of
Proposition~\ref{prop:EMRE-consistency}.
\end{proof}

\subsection{Consistent Recovery of the Reachability Matrix}
\label{sec:reachability}

Let $R=(R_{ij})_{i,j\in V}$ denote the reachability matrix of $\mathcal D$,
where
\[
R_{ij}:=\mathbf 1_{\{i\in\operatorname{An}(j)\}}.
\]
The diagonal is known, $R_{ii}=1$, so only the off-diagonal entries have to be
estimated.

A valid causal order reduces this task to a family of one-sided decisions. Let
$\pi\in\Pi$ and let $i\neq j$. If $\pi(i)>\pi(j)$, then $i\notin an(j)$,
because every valid causal order places ancestors before their descendants;
hence $R_{ij}=0$ and nothing needs to be tested. If $\pi(i)<\pi(j)$, then
$j\notin an(i)$, so either $i\in an(j)$ or the two nodes are not ancestrally
related, and it remains to decide which. Proposition~\ref{prop:EMRE-consistency}
shows that the EMRE separates these two cases at different scales. If
$i\in an(j)$, then $\widehat F_{ij}\ConvP F_{ij}>0$. If $i\notin An(j)$, then
$F_{ij}=0$ and, by \eqref{eq:nonancestor-weibull-limit} with $i$ as the
conditioning node, $U_i(n)\widehat F_{ij}$ converges in distribution, so that
$\widehat F_{ij}=O_p\bigl(U_i(n)^{-1}\bigr)$ with $U_i(n)\to\infty$.
Consequently, any threshold $\tau_n>0$, deterministic or random, that satisfies
\begin{equation}\label{eq:threshold-conditions}
    \tau_n\ConvP0
    \qquad\text{and}\qquad
    U_i(n)\,\tau_n\ConvP\infty
\end{equation}
gives a consistent decision: declare $i\in an(j)$ if $\widehat F_{ij}>\tau_n$,
and $i\notin an(j)$ otherwise. Indeed, if $i\in an(j)$, the first condition
gives $\mathbb P(\widehat F_{ij}\leq\tau_n)\to0$; if $i\notin An(j)$, the
second condition and the tightness of $U_i(n)\widehat F_{ij}$ give
\[
    \mathbb P\bigl(\widehat F_{ij}>\tau_n\bigr)
    =\mathbb P\bigl(U_i(n)\widehat F_{ij}>U_i(n)\tau_n\bigr)\longrightarrow0.
\]
In words, the threshold must vanish, so that no ancestral relation is lost,
but more slowly than the non-ancestral EMRE, which decays at the rate
$U_i(n)^{-1}$.

The conditions \eqref{eq:threshold-conditions} cannot be imposed directly,
because $U_i(n)$ depends on the unknown tail of $X_i$. For instance, if
$\mathbb P(X_i>x)\sim cx^{-\alpha}$, the deterministic choice
$\tau_n=n^{-\gamma}$ satisfies them if and only if $0<\gamma<1/\alpha$, which
requires knowledge of the tail index. We instead use the data-driven threshold
\[
    \tau_{n,i}:=X_{i,(n-k_n)}^{-1},
\]
the reciprocal of the $(k_n+1)$st largest observation of $X_i$, which
satisfies \eqref{eq:threshold-conditions} automatically. By the
intermediate-order-statistic convergence
$X_{i,(n-k_n)}/U_i(n/k_n)\ConvP1$, the threshold behaves like
$U_i(n/k_n)^{-1}\to0$. Moreover,
\[
    U_i(n)\,\tau_{n,i}
    =\frac{U_i(n)}{U_i(n/k_n)}\cdot\frac{U_i(n/k_n)}{X_{i,(n-k_n)}},
\]
where the second factor converges in probability to one and, by
Lemma~\ref{Potter's bounds}, the first is at least
$(1-\epsilon)\,k_n^{1/\alpha-\epsilon}$ for every $\epsilon\in(0,1/\alpha)$
and all large $n$, so that it tends to infinity. Thus
$X_{i,(n-k_n)}$ estimates a quantile of $X_i$ that is high, but lower than
$U_i(n)$ by a factor growing roughly like $k_n^{1/\alpha}$, and the threshold
adapts to the tail of $X_i$ without estimating it. The resulting decision is
also unchanged if $X_i$ is multiplied by a positive constant, since
$\widehat F_{ij}$ and $\tau_{n,i}$ are then divided by the same constant.

Accordingly, given the causal order $\widehat\pi$ returned by
Algorithm~\ref{alg:ease}, we define
\begin{equation}\label{eq:ordered-reachability-estimator}
\widehat R_{ij}
:=
\begin{cases}
1,
& i=j,\\[1mm]
\mathbf 1_{\left\{
\widehat F_{ij}>X_{i,(n-k_n)}^{-1}
\right\}},
& \widehat\pi(i)<\widehat\pi(j),\\[1mm]
0,
& \widehat\pi(i)>\widehat\pi(j).
\end{cases}
\end{equation}
For example, if the estimated causal order is $(1,2,3,4)$, then we test the
pairs $(1,2),(1,3),(1,4)$, followed by $(2,3),(2,4)$ and $(3,4)$; all pairs in
the reverse direction are set equal to zero.

\begin{proposition}[Consistent recovery of the reachability matrix]
\label{prop:reachability-consistency}
Under the assumptions of Proposition~\ref{prop:EMRE-consistency}, including
Assumptions~\ref{ass:lower-tail} and~\ref{ass:neg-moment}, let
$\widehat\pi$ be the causal order returned by Algorithm~\ref{alg:ease}, and
define $\widehat R$ by \eqref{eq:ordered-reachability-estimator}. Then
\[
\mathbb P\!\left(\widehat R=R\right)\longrightarrow1.
\]
\end{proposition}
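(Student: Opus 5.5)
We reduce the event $\{\widehat R\neq R\}$ to finitely many pairwise events and bound each one with Proposition~\ref{prop:causal_order_consistency} and Proposition~\ref{prop:EMRE-consistency}. Let $\mathcal E_n:=\{\widehat\pi\in\Pi\}$. By Proposition~\ref{prop:causal_order_consistency}, $\mathbb P(\mathcal E_n^c)\to0$. On $\mathcal E_n$, every pair with $\widehat\pi(i)>\widehat\pi(j)$ satisfies $i\notin an(j)$, so $\widehat R_{ij}=0=R_{ij}$ automatically. The diagonal is correct by definition. Hence
\[
\mathbb P(\widehat R\neq R)\leq\mathbb P(\mathcal E_n^c)
+\sum_{i\in an(j)}\mathbb P\bigl(\widehat F_{ij}\leq X_{i,(n-k_n)}^{-1}\bigr)
+\sum_{\substack{i\neq j\\ i\notin An(j)}}\mathbb P\bigl(\widehat F_{ij}>X_{i,(n-k_n)}^{-1}\bigr).
\]
On $\mathcal E_n$, a forward-tested pair is either ancestral or non-ancestral, so these sums cover every possible error. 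Dropping the restriction to $\mathcal E_n$ only enlarges the bound, and this matters because $\widehat\pi$ is random and dependent on the data. Since $d$ is fixed, it suffices to show that each summand tends to zero.

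\emph{Ancestral pairs.} Fix $i\in an(j)$. We will show $X_{i,(n-k_n)}^{-1}\ConvP0$. This follows from $X_{i,(n-k_n)}/U_i(n/k_n)\ConvP1$, which is the intermediate order statistic result used in the proof of Proposition~\ref{prop:EMRE-consistency} with node $i$ in place of node $1$, together with $U_i(n/k_n)\to\infty$. Part~\ref{consistencyofemre} gives $\widehat F_{ij}\ConvP F_{ij}\geq m_r>0$. Then
\[
\mathbb P\bigl(\widehat F_{ij}\leq X_{i,(n-k_n)}^{-1}\bigr)\leq\mathbb P\bigl(\widehat F_{ij}\leq F_{ij}/2\bigr)+\mathbb P\bigl(X_{i,(n-k_n)}^{-1}\geq F_{ij}/2\bigr)\to0 .
\]

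\emph{Non-ancestral pairs.} Fix $i\neq j$ with $i\notin An(j)$. Part~(a) of Proposition~\ref{prop:EMRE-consistency}, applied with $(i,j)$ in the roles of $(1,2)$, shows that $U_i(n)\widehat F_{ij}$ converges weakly to a Weibull law and is therefore tight. This is the step that uses Assumptions~\ref{ass:lower-tail} and~\ref{ass:neg-moment}. It remains to show $U_i(n)X_{i,(n-k_n)}^{-1}\ConvP\infty$. We write this quantity as $\bigl[U_i(n)/U_i(n/k_n)\bigr]\cdot\bigl[U_i(n/k_n)/X_{i,(n-k_n)}\bigr]$. The second factor tends to one in probability. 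For the first factor, $U_i\in RV_{1/\alpha}$ by \eqref{eq:X1-tail-proof} applied to node $i$. Lemma~\ref{Potter's bounds} with $t=n/k_n$ and $x=k_n$ gives $U_i(n)/U_i(n/k_n)\geq\tfrac12k_n^{1/\alpha-\epsilon}\to\infty$. For any $M>0$, we then get
\[
\mathbb P\bigl(\widehat F_{ij}>X_{i,(n-k_n)}^{-1}\bigr)\leq\mathbb P\bigl(U_i(n)\widehat F_{ij}>M\bigr)+\mathbb P\bigl(U_i(n)X_{i,(n-k_n)}^{-1}\leq M\bigr).
\]
The second term vanishes as $n\to\infty$, and the first can be made arbitrarily small uniformly in $n$ by tightness.

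\emph{Main obstacle.} The substantive input is the exact non-ancestral rate from Part~(a), which is already established; the remaining argument is bookkeeping. Two points in that bookkeeping need care. First, $\widehat\pi$ is data-dependent, which is why we bound each pair unconditionally rather than conditioning on the estimated order. Second, the EMRE and the threshold may use different intermediate sequences. The argument above covers that case unchanged, because Part~(a) does not depend on the choice of $k_n$, and the Potter bound only requires the threshold's sequence to tend to infinity with $k_n/n\to0$.
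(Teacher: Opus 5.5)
Your proposal is correct and follows essentially the same route as the paper: restrict to $\mathcal E_n=\{\widehat\pi\in\Pi\}$, handle ancestral pairs via $\widehat F_{ij}\ConvP F_{ij}>0$ and $X_{i,(n-k_n)}^{-1}\ConvP0$, and handle non-ancestral pairs by combining the tightness of $U_i(n)\widehat F_{ij}$ from Part~(a) with the Potter bound $U_i(n)/U_i(n/k_n)\geq(1-\epsilon)k_n^{1/\alpha-\epsilon}$ (for $\epsilon<1/\alpha$), finishing with a union bound over the finitely many pairs. The differences are cosmetic: you apply the union bound at the start rather than pair by pair, and you phrase the non-ancestral step as $U_i(n)X_{i,(n-k_n)}^{-1}\ConvP\infty$ rather than $X_{i,(n-k_n)}\widehat F_{ij}=o_p(1)$.
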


\begin{proof}
Let $\Pi$ denote the set of valid causal orders of $\mathcal D$ and put
\[
\mathcal E_n:=\{\widehat\pi\in\Pi\}.
\]
By Proposition~\ref{prop:causal_order_consistency},
$\mathbb P(\mathcal E_n)\to1$.

For each $i\in V$, define
\[
U_i(t):=G_i^{\leftarrow}\!\left(1-\frac{1}{t}\right),
\]
where $G_i$ is the distribution function of $X_i$. The
intermediate-order-statistic convergence for regularly varying tails (see the
argument in the proof of Proposition~\ref{prop:EMRE-consistency}, with $X_i$ in
place of $X_1$) gives
\begin{equation}\label{eq:intermediate-order-statistic}
\frac{X_{i,(n-k_n)}}{U_i(n/k_n)}
\xrightarrow{\mathbb P}1.
\end{equation}

The diagonal entries are equal to one by definition, so let $i\neq j$.
Suppose first that $i\in an(j)$. On $\mathcal E_n$, every valid
causal order places $i$ before $j$, so $\widehat\pi(i)<\widehat\pi(j)$.
Moreover, $F_{ij}>0$ and Proposition~\ref{prop:EMRE-consistency} yields
\[
\widehat F_{ij}\xrightarrow{\mathbb P}F_{ij}.
\]
Since \eqref{eq:intermediate-order-statistic} implies
$X_{i,(n-k_n)}^{-1}\xrightarrow{\mathbb P}0$, we obtain
\[
\mathbb P\!\left(
\widehat F_{ij}>X_{i,(n-k_n)}^{-1}
\right)\longrightarrow1.
\]
Consequently,
\[
\mathbb P(\widehat R_{ij}=0)
\leq
\mathbb P(\mathcal E_n^c)
+
\mathbb P\!\left(
\widehat F_{ij}\leq X_{i,(n-k_n)}^{-1}
\right)
\longrightarrow0.
\]

Now suppose that $i\notin\operatorname{An}(j)$. If
$\widehat\pi(i)>\widehat\pi(j)$, then \eqref{eq:ordered-reachability-estimator}
sets $\widehat R_{ij}=0$. If $\widehat\pi(i)<\widehat\pi(j)$, then the pair is
tested. By \eqref{eq:nonancestor-weibull-limit} in
Proposition~\ref{prop:EMRE-consistency},
\[
\widehat F_{ij}=O_p\!\left(U_i(n)^{-1}\right).
\]
Therefore,
\begin{align*}
X_{i,(n-k_n)}\widehat F_{ij}
&=
\frac{X_{i,(n-k_n)}}{U_i(n/k_n)}
\frac{U_i(n/k_n)}{U_i(n)}
\bigl\{U_i(n)\widehat F_{ij}\bigr\}.
\end{align*}
The first factor converges in probability to one by
\eqref{eq:intermediate-order-statistic}, and the third factor is $O_p(1)$.
Since $X_i\in RV_+(\alpha)$ by Lemma~\ref{distributionHTHSCM} (take
$A=\{\boldsymbol z:z_i>1\}$), its upper quantile function $U_i$ belongs to
$RV_{1/\alpha}$; see Section~\ref{sec:prelim}. Lemma~\ref{Potter's bounds},
applied to $f=U_i$ with $t=n/k_n\to\infty$ and $x=k_n\geq1$, gives, for every
$\epsilon\in(0,1/\alpha)$ and all sufficiently large $n$,
\[
\frac{U_i(n)}{U_i(n/k_n)}=\frac{U_i\bigl(k_n\cdot n/k_n\bigr)}{U_i(n/k_n)}
\geq(1-\epsilon)\,k_n^{1/\alpha-\epsilon}.
\]
Since $k_n\to\infty$, this gives
\[
\frac{U_i(n/k_n)}{U_i(n)} \to 0.
\]
It follows that
\[
X_{i,(n-k_n)}\widehat F_{ij}=o_p(1).
\]
Hence
\begin{align*}
\mathbb P(\widehat R_{ij}=1)
&\leq
\mathbb P\!\left(
\widehat F_{ij}>X_{i,(n-k_n)}^{-1}
\right)\\
&=
\mathbb P\!\left(
X_{i,(n-k_n)}\widehat F_{ij}>1
\right)
\longrightarrow0.
\end{align*}
Thus $\widehat R_{ij}\xrightarrow{\mathbb P}R_{ij}$ for every $i,j\in V$.
Since $V$ is finite, a union bound over all pairs completes the proof.
\end{proof}

\begin{remark}\label{rem:threshold}
Since the argument uses only \eqref{eq:threshold-conditions}, the EMRE and the
threshold may be computed with different intermediate sequences, say $k_n$ in
$\widehat F_{ij}$ and $k_n'$ in $X_{i,(n-k_n')}$, provided both tend to
infinity and are $o(n)$. This is the setting of Sections~\ref{sec:simulation}
and~\ref{sec:danube-sensitivity}.
The choice of $k_n'$ trades off the two types of error. A smaller $k_n'$
gives a smaller threshold and risks declaring non-ancestral pairs ancestral; a
larger $k_n'$ gives a larger threshold and risks missing weak ancestral
relations.
\end{remark}

\section{Simulation Study}
\label{sec:simulation}

We conducted a Monte Carlo study to evaluate the finite-sample performance
of the proposed structure-learning procedure. The procedure first estimates
a causal order using EASE and then estimates the reachability matrix using
the extremal minimum ratio estimator.

\subsection{Simulation Design}

We considered
\[
(d,n)\in\{(5,1000),(10,5000),(30,50000)\},
\qquad
\alpha\in\{1,2\},
\]
with \(50\) independent replications for each configuration.

In each replication, we sampled a causal order uniformly from all
permutations of the \(d\) nodes and selected, uniformly without replacement,
about \(30\%\) of the \(\binom{d}{2}\) admissible forward edges, namely
\(3\), \(14\) and \(130\) edges for \(d=5\), \(10\) and \(30\). Nonzero edge coefficients were
independently generated from
\[
c_{ji}\sim\operatorname{Unif}(0.2,0.9).
\]

Each node independently received an aggregation index
\[
\rho_i\in\{1,2,\infty\},
\qquad
\mathbb P(\rho_i=r)=\frac13.
\]
The structural equations were
\[
X_i=
\begin{cases}
\displaystyle
\left[
\varepsilon_i^{\rho_i}
+
\sum_{j\in\operatorname{pa}(i)}
(c_{ji}X_j)^{\rho_i}
\right]^{1/\rho_i},
& \rho_i\in\{1,2\},\\[3mm]
\displaystyle
\varepsilon_i
\vee
\bigvee_{j\in\operatorname{pa}(i)}
c_{ji}X_j,
& \rho_i=\infty.
\end{cases}
\]
Thus, one simulated DAG could contain sum-linear, Euclidean-norm, and
max-linear mechanisms. The resulting model is itself an HT-HSCM.
Part~\ref{consistencyofemre} and part~(a) of
Proposition~\ref{prop:EMRE-consistency}, together with
Propositions~\ref{prop:causal_order_consistency}
and~\ref{prop:reachability-consistency}, therefore apply to it. Part~(b) does
not apply, because it assumes a common $p$.
The innovations were independent across nodes and
observations and satisfied
\[
\mathbb P(\varepsilon_i>x)=x^{-\alpha},
\qquad x\geq1.
\]
Because $\varepsilon_i\geq1$, these innovations satisfy
Assumptions~\ref{ass:lower-tail} and~\ref{ass:neg-moment} trivially.

The true reachability matrix was obtained by transitive closure:
\[
R_{ij}
=
\mathbf 1_{\{i\in\operatorname{An}(j)\}}.
\]

\subsection{Estimation and Performance Measures}

We considered tuning exponents \(\kappa,\kappa'\) taking values in the grid
\[
\mathcal K
=
\{0.2,0.3,\ldots,0.9\}.
\]
For \(\kappa\in\mathcal K\), let \(k_n=\lfloor n^{\kappa}\rfloor\) and compute
\[
\widehat F_{ij}(\kappa)
=
\min_{\ell:\,
X_{i\ell}>X_{i,(n-k_n)}}
\frac{X_{j\ell}}{X_{i\ell}}.
\]
The matrix \(\widehat F(\kappa)\) was supplied to EASE
(Algorithm~\ref{alg:ease}), which returns an estimated causal order
\(\widehat\pi_{\kappa}\). As in Section~\ref{sec:structure},
\(\widehat\pi_{\kappa}(i)\) denotes the position of node \(i\) in this order.
For \(\kappa'\in\mathcal K\), set \(k_n'=\lfloor n^{\kappa'}\rfloor\) and
define, as in \eqref{eq:ordered-reachability-estimator},
\[
\widehat R_{ij}(\kappa,\kappa')
=
\begin{cases}
1,
& i=j,\\[1mm]
\mathbf 1\left\{
\widehat F_{ij}(\kappa)>
X_{i,(n-k_n')}^{-1}
\right\},
&
\widehat\pi_{\kappa}(i)<\widehat\pi_{\kappa}(j),\\[1mm]
0,
&
\widehat\pi_{\kappa}(i)>\widehat\pi_{\kappa}(j).
\end{cases}
\]
All \(64\) combinations \((\kappa,\kappa')\in\mathcal K^2\) were evaluated
on each simulated data set.

Performance was evaluated using the off-diagonal entries of the
reachability matrix.Let FP,TP,FN denote the false positives, true positives and false negatives respectively. We report the normalized structural Hamming
distance
\[
\operatorname{NSHD}
=
\frac{\operatorname{FP}+\operatorname{FN}}{d(d-1)}
\]
and
\[
\operatorname{Power}
=
\frac{\operatorname{TP}}
{\operatorname{TP}+\operatorname{FN}}.
\]
The ancestral violation rate of the estimated order is
\[
\operatorname{AVR}(\widehat\pi_{\kappa})
=
\frac{
\displaystyle
\sum_{i\neq j}
R_{ij}\,
\mathbf 1\left\{
\widehat\pi_{\kappa}(i)>\widehat\pi_{\kappa}(j)
\right\}
}{
\displaystyle
\sum_{i\neq j}R_{ij}
},
\]
the proportion of ancestral pairs that the estimated order places in the
wrong direction. Lower NSHD and AVR and higher power indicate better
performance.

\subsection{Results}
\begin{table}[!t]
\centering
\small
\setlength{\tabcolsep}{6pt}
\renewcommand{\arraystretch}{1.15}

\caption{Representative reachability-recovery results averaged over
50 Monte Carlo replications.}
\label{tab:reachability-summary}

\begin{tabular}{cccccc}
\toprule
\((d,n)\) & \(\alpha\) & \((\kappa,\kappa')\)
& NSHD & Power & AVR \\
\midrule
\((5,1000)\)
& \(1\) & \((0.4,0.5)\)
& 0.0000 & 1.0000 & 0.0000 \\

\((5,1000)\)
& \(2\) & \((0.4,0.5)\)
& 0.0060 & 0.9760 & 0.0000 \\
\midrule

\((10,5000)\)
& \(1\) & \((0.4,0.5)\)
& 0.0016 & 1.0000 & 0.0000 \\

\((10,5000)\)
& \(2\) & \((0.4,0.5)\)
& 0.0180 & 0.9714 & 0.0000 \\
\midrule

\((30,50000)\)
& \(1\) & \((0.5,0.7)\)
& 0.0027 & 0.9993 & 0.0000 \\

\((30,50000)\)
& \(2\) & \((0.5,0.7)\)
& 0.0307 & 0.9686 & 0.0001 \\
\bottomrule
\end{tabular}
\end{table}

Table~\ref{tab:reachability-summary} reports one tuning pair per dimension,
taken from the low-error region of the tuning grid. The same pair is used for
both values of \(\alpha\), so that the two tail indices can be compared
directly. Figures~\ref{fig:pq-nshd}--\ref{fig:pq-avr} show the average NSHD,
power and AVR over the whole \(8\times8\) grid of \((\kappa,\kappa')\), with
the pairs of Table~\ref{tab:reachability-summary} marked. 

At the reported pairs, the procedure achieved high power and nearly perfect
causal ordering. Power ranged from \(0.9686\) to \(1\), and AVR never exceeded
\(0.0001\). Reachability recovery was more accurate for \(\alpha=1\) in every
dimension: the NSHD increased from \(0.0000\), \(0.0016\) and \(0.0027\) for
\(\alpha=1\) to \(0.0060\), \(0.0180\) and \(0.0307\) for \(\alpha=2\).
Remark~\ref{rem:threshold} explains this ordering. For a non-ancestral pair,
$\widehat F_{ij}$ is of order $U_i(n)^{-1}$, whereas the threshold
$X_{i,(n-k_n')}^{-1}$ is of order $U_i(n/k_n')^{-1}$. Their ratio therefore
grows like $(k_n')^{1/\alpha}$, up to a slowly varying factor, and this gap is
wider when $\alpha=1$.

\begin{figure}[!t]
\centering
\includegraphics[width=\textwidth]{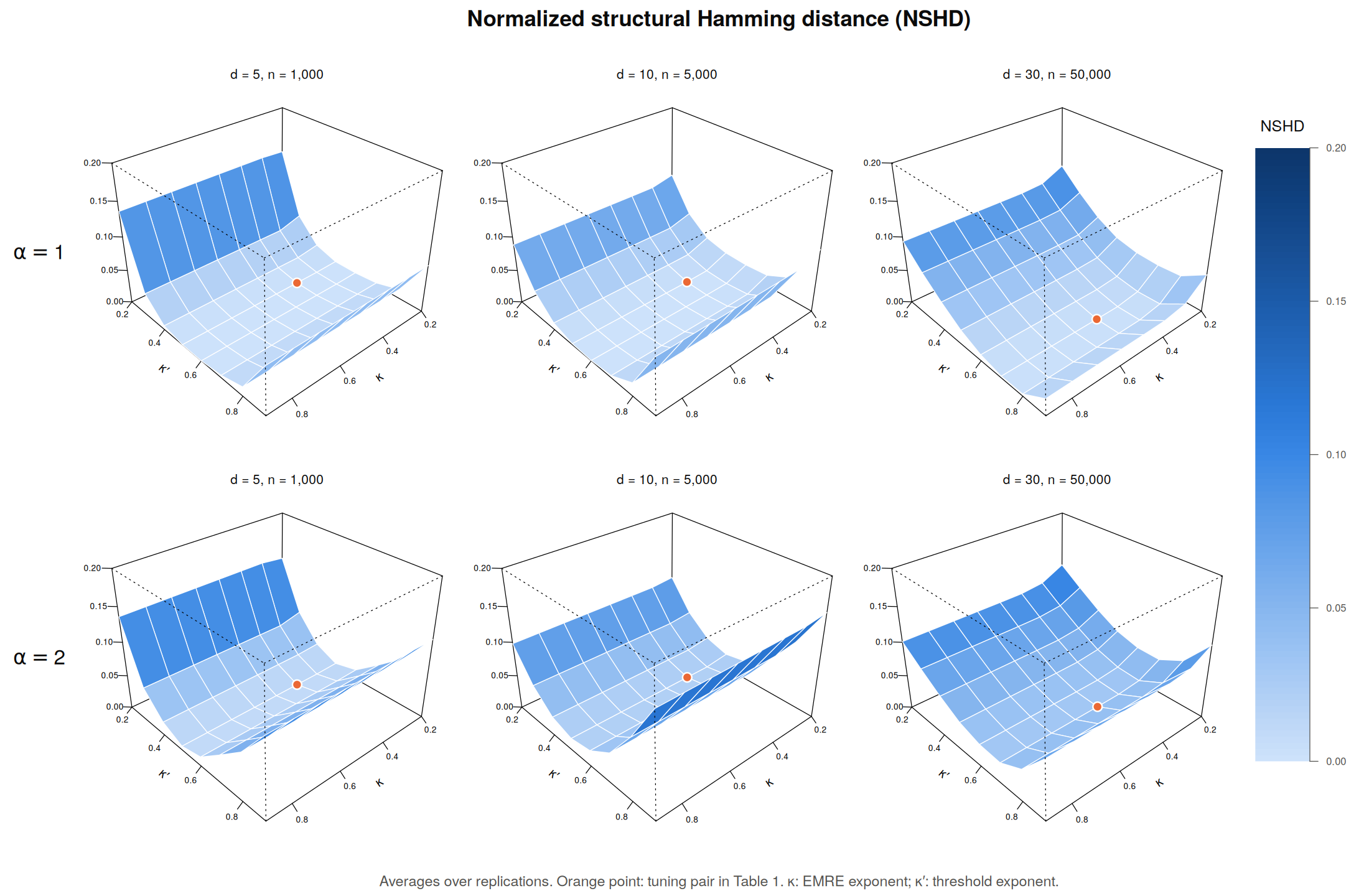}
\caption{Average normalized structural Hamming distance over the tuning grid,
for \(\alpha=1\) (top row) and \(\alpha=2\) (bottom row). Here \(\kappa\) is
the EMRE exponent, \(k_n=\lfloor n^{\kappa}\rfloor\), and \(\kappa'\) is the
threshold exponent, \(k_n'=\lfloor n^{\kappa'}\rfloor\). Each surface averages
50 replications, all panels share one colour scale, and the orange point marks
the pair reported in Table~\ref{tab:reachability-summary}.}
\label{fig:pq-nshd}
\end{figure}

\paragraph{Threshold exponent.}
The NSHD surfaces in Figure~\ref{fig:pq-nshd} form a valley in the
\(\kappa'\) direction, with one type of error on each side. For small
\(\kappa'\), the threshold is too low and non-ancestral pairs are declared
ancestral. At \(\kappa'=0.2\) and \(\kappa=0.5\), the average number of false
positives is \(2.7\), \(8.0\) and \(81.3\) for \(d=5\), \(10\) and \(30\) when
\(\alpha=1\), and no ancestral relation is missed. For large \(\kappa'\), the
threshold is too high and weak ancestral relations are missed. At
\(\kappa'=0.9\), false positives essentially vanish, but the average number of
false negatives is \(1.4\), \(7.4\) and \(21.8\).

The power surfaces in Figure~\ref{fig:pq-power} show the second effect
directly. For \(\alpha=1\), power is essentially one for
\(\kappa'\leq0.6\) and decreases afterwards; for \(\alpha=2\), the decrease
starts earlier, near \(\kappa'=0.4\) when \(d\leq10\). At \(\kappa'=0.9\), it
equals \(0.585\), \(0.683\) and \(0.931\) for \(\alpha=1\), and \(0.315\),
\(0.428\) and \(0.784\) for \(\alpha=2\). For \(\kappa'=0.9\) and \(n=1000\),
we have \(k_n'=501\), so the threshold is the reciprocal of roughly the sample
median of \(X_i\). Such a threshold exceeds the AIR of an ancestral pair that
is connected only through long paths of weak edges, since that AIR is built
from products of several weights in \((0.2,0.9)\).

\begin{figure}[!t]
\centering
\includegraphics[width=\textwidth]{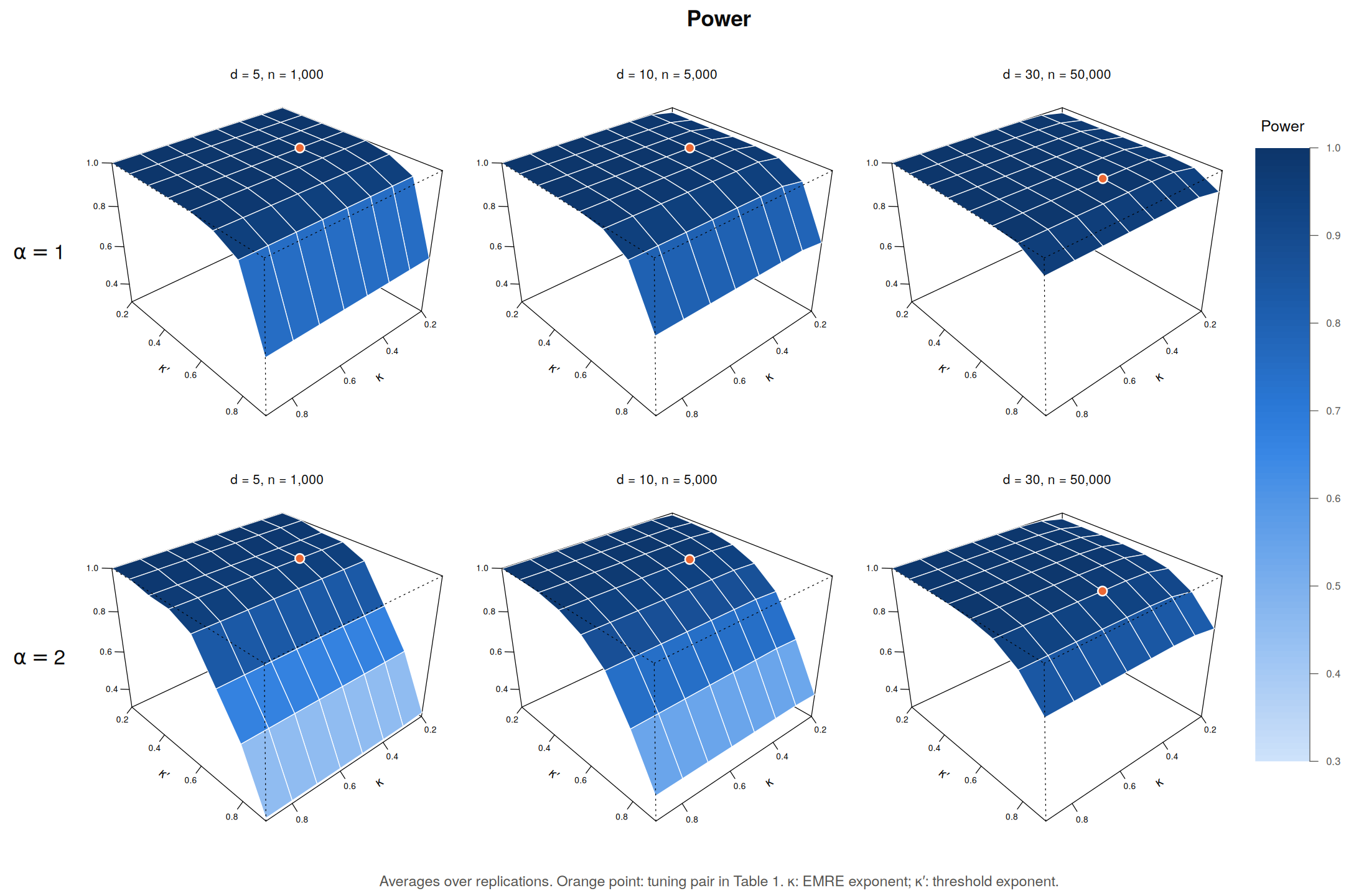}
\caption{Average power over the tuning grid, for \(\alpha=1\) (top row) and
\(\alpha=2\) (bottom row). Axes, replications, colour scale and orange points
are as in Figure~\ref{fig:pq-nshd}.}
\label{fig:pq-power}
\end{figure}

Between the two error regimes lies a band of low NSHD. For \(\alpha=1\) and
\(\kappa\geq0.4\), the NSHD is at most \(0.01\) for \(\kappa'\in[0.4,0.7]\)
when \(d=5\), for \(\kappa'\in[0.5,0.7]\) when \(d=10\), and for
\(\kappa'\in[0.6,0.8]\) when \(d=30\). The band moves to larger \(\kappa'\) as
\((d,n)\) grows. With more observations, a larger \(k_n'\) still leaves
\(n/k_n'\) large enough for the threshold to stay below the AIRs of ancestral
pairs, and the larger \(k_n'\) widens the gap \((k_n')^{1/\alpha}\) between
the threshold and the non-ancestral estimates. Because \(d\) and \(n\)
increase together in this design, the two effects cannot be separated here.

For \(\alpha=2\), the valley is narrower and shallower. The minimum NSHD is
\(0.006\), \(0.018\) and \(0.031\), attained at \(\kappa'=0.5\), \(0.5\) and
\(0.7\), and only \(d=5\) has grid points with NSHD at most \(0.01\). Both
error types increase. The gap \((k_n')^{1/\alpha}\) between the threshold and
the non-ancestral estimates is smaller, and the threshold, of order
\((n/k_n')^{-1/\alpha}\), decreases more slowly, so weak ancestral relations
are missed at smaller \(\kappa'\). For \(d=30\) and \(\alpha=2\), false
positives persist even at \(\kappa'=0.9\), with \(8.2\) on average. They
plausibly come from non-ancestral pairs whose common ancestors dominate the
private part of \(X_i\). For such pairs, the constant \(A\) in part~(a) of
Proposition~\ref{prop:EMRE-consistency} is small, and the EMRE approaches
zero slowly.

\begin{figure}[!t]
\centering
\includegraphics[width=\textwidth]{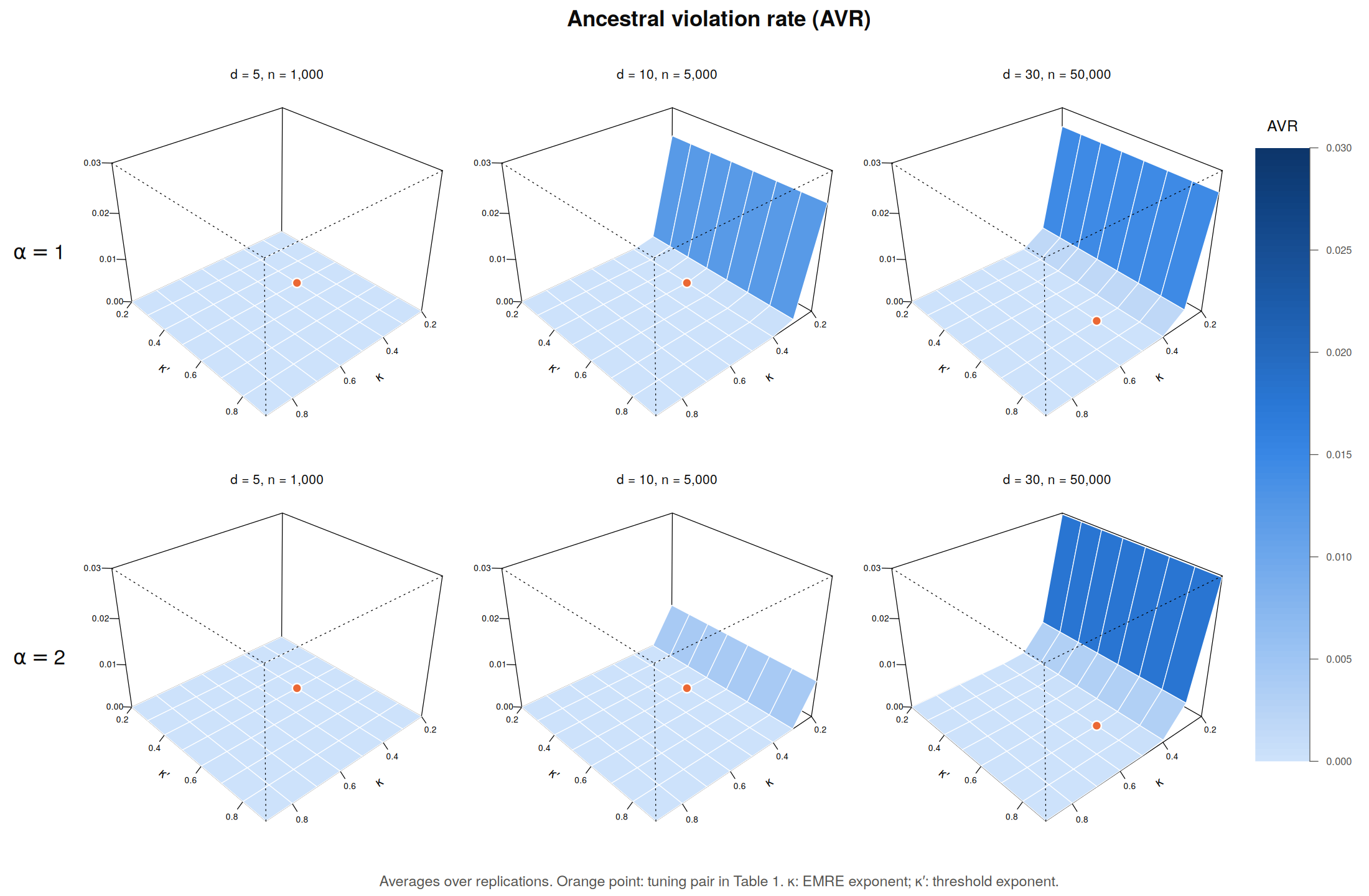}
\caption{Average ancestral violation rate over the tuning grid, for
\(\alpha=1\) (top row) and \(\alpha=2\) (bottom row). The estimated order
depends only on \(\kappa\), so each surface is constant in \(\kappa'\). Axes,
replications, colour scale and orange points are as in
Figure~\ref{fig:pq-nshd}.}
\label{fig:pq-avr}
\end{figure}

\paragraph{EMRE exponent.}
Beyond a moderate value of \(\kappa\), the NSHD and power surfaces are flat in
the \(\kappa\) direction. To four decimals, the NSHD does not change with
\(\kappa\) once \(\kappa\geq0.3\), \(0.4\) and \(0.5\) for \(d=5\), \(10\) and
\(30\), respectively, and over the same ranges the power changes by less than
\(10^{-4}\). Two properties of the EMRE explain this. The sets of extreme
observations are nested in \(k_n\), so \(\widehat F_{ij}(\kappa)\) is
nonincreasing in \(\kappa\). By Lemma~\ref{lem:ratio-lower-bound},
\(\widehat F_{ij}(\kappa)\geq F_{ij}\) for every \(\kappa\). Increasing
\(\kappa\) can therefore only lower the estimates. For ancestral pairs, they
remain bounded below by the AIR; for non-ancestral pairs, they move further
below the threshold.

The estimated order depends only on \(\widehat F(\kappa)\), so the AVR
surfaces in Figure~\ref{fig:pq-avr} are constant in \(\kappa'\). For \(d=5\),
the order is recovered without error at every \(\kappa\). For \(d=10\) and
\(30\), ordering errors are concentrated at the smallest exponents. At
\(\kappa=0.2\), where \(k_n=5\) and \(8\), the AVR equals \(0.024\) and
\(0.026\) for \(\alpha=1\), and \(0.008\) and \(0.030\) for \(\alpha=2\). With
so few extreme observations, the EMRE of a non-ancestral pair need not be
small, and EASE can place a descendant before one of its ancestors. For
\(\kappa\geq0.4\), the AVR is at most \(0.0005\) in every configuration.

\paragraph{Choice of tuning parameters.}
Taken together, the three figures suggest a moderate \(\kappa\), for which
both the estimated order and the reachability decisions are stable, and an
intermediate \(\kappa'\) that increases with the sample size. The pairs in
Table~\ref{tab:reachability-summary} follow this rule. They lie near, but not
always at, the grid minimum of the NSHD; for example, for \((d,n)=(10,5000)\)
and \(\alpha=1\), the minimum \(0.0004\) is attained at \(\kappa'=0.6\), for
instance at \((\kappa,\kappa')=(0.4,0.6)\). By
Remark~\ref{rem:threshold}, the two exponents may be chosen separately, as
they are here.

\section{Data Application: The upper Danube River basin}

\label{sec:danube-sensitivity}

The causal order of Section~\ref{sec:structure} and the reachability
estimator~\eqref{eq:ordered-reachability-estimator} depend on two intermediate
sequences: $k_n=\lfloor n^{\kappa}\rfloor$, the number of
largest observations of $X_i$ used by the EMRE, and $k_n'=\lfloor n^{\kappa'}\rfloor$,
which sets the threshold $X_{i,(n-k_n')}^{-1}$ of the reachability estimator.
The theory only requires $k_n,k_n'\to\infty$ with $k_n/n,k_n'/n\to0$, under
which the threshold satisfies~\eqref{eq:threshold-conditions}; this leaves the
exponents $\kappa,\kappa'\in(0,1)$ to the user. In this section we
examine how the estimated causal order and the estimated reachability matrix
change with $(\kappa,\kappa')$ on river discharge data whose causal structure
is known.

\subsection{Data and preprocessing}
\label{sec:danube-data}

We use the twelve gauging stations of the upper Danube basin analysed by
\citet{gnecco2021causal}, shown in Figure~\ref{fig:danube-network}. Eleven of
them belong to the 31 stations studied by \citet{asadi2015extremes}; the
twelfth, labelled 32, lies on the Inn close to the Austrian border. The data are
daily mean discharges (m$^3$/s) recorded by the Bavarian Environmental Agency and
distributed with the replication material of \citet{gnecco2021causal}. The
river network has eleven edges, and its transitive closure contains $18$
ancestral pairs among the $d(d-1)=132$ ordered pairs of distinct stations.

\begin{figure}[htbp]
  \centering
  \includegraphics[width=0.72\textwidth]{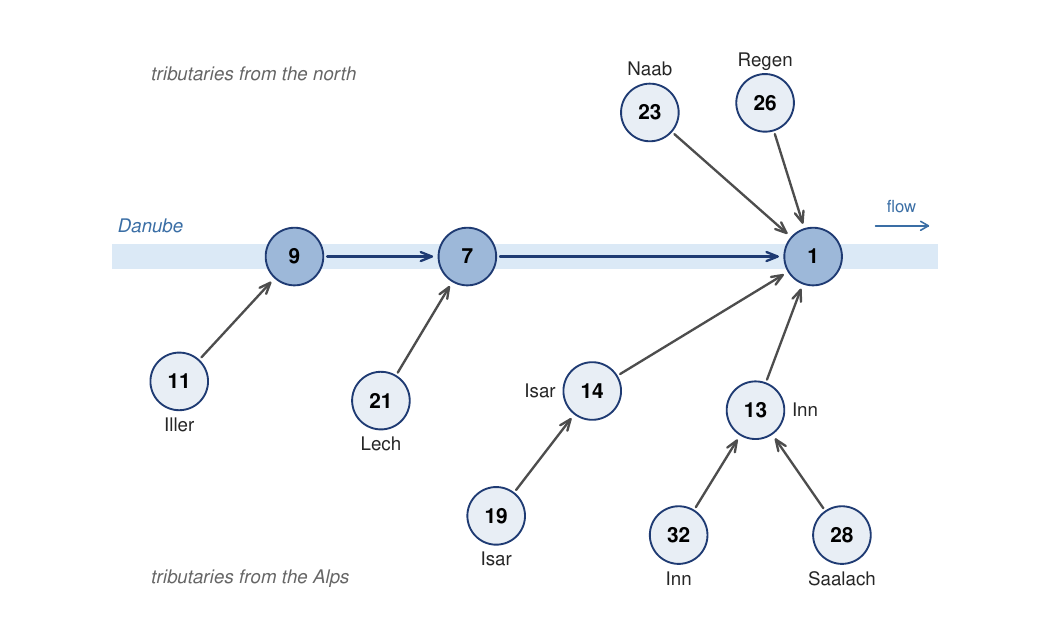}
  \caption{Schematic of the river network formed by the twelve gauging
    stations (not to scale). Dark nodes lie on the Danube, which flows from
    left to right; tributaries from the north are drawn above it and alpine
    tributaries below. Arrows give the direction of flow. Station~1 is the most
    downstream station, and station~32 lies on the Inn upstream of
    station~13.}
  \label{fig:danube-network}
\end{figure}

Daily discharges are strongly dependent in time, and a single flood is
recorded at several stations on consecutive days. Following
\citet{asadi2015extremes}, we restrict attention to June, July and August of
1960--2010, the months in which most floods in the basin occur, and extract
flood events as follows. Each daily observation is replaced by its rank within
the summer series of its station. The summer day with the largest rank across
the twelve stations is selected, provided that the nine-day window centred at
it does not overlap a window that has already been used; the corresponding
event is the vector of station-wise maxima over the window. The window is then
removed and the procedure is repeated until no admissible window remains.
Windows may extend into the last days of May or the first days of September.
Applied to the 31 stations of \citet{asadi2015extremes}, this procedure returns
$428$ events, the number in the declustered data distributed with the R package
\texttt{graphicalExtremes}, and $87\%$ of these events coincide exactly with
the published ones. For the twelve stations it yields $n=430$ flood events,
which form the data set analysed below.

The margins are standardized separately at each station. For station $i$,
let $x_{i,(1)}\le x_{i,(2)}\le\cdots\le x_{i,(n)}$ be the order statistics of
the event values $x_{i1},\ldots,x_{in}$, let $m<n$, and let
$m_i:=\#\{\ell:x_{i\ell}>x_{i,(n-m)}\}$ be the number of
exceedances of $x_{i,(n-m)}$, so that $m_i=m$ in the absence of ties.
Following routine practice in extreme value analysis, the distribution
function $G_i$ of $X_i$ is estimated semiparametrically by
\[
  \widehat G_i(x)=
  \begin{cases}
    \dfrac{1}{n}\displaystyle\sum_{\ell=1}^{n}\mathbf 1_{\{x_{i\ell}\le x\}},
      & x\le x_{i,(n-m)},\\[4mm]
    1-\dfrac{m_i}{n}\Bigl(1+\hat\gamma_i\,\dfrac{x-x_{i,(n-m)}}{\hat\sigma_i}\Bigr)^{-1/\hat\gamma_i},
      & x>x_{i,(n-m)},
  \end{cases}
\]
where $\hat\gamma_i$ and $\hat\sigma_i$ are the maximum likelihood estimates of
the shape and scale parameters of a generalized Pareto distribution fitted to
the $m_i$ exceedances $x_{i\ell}-x_{i,(n-m)}>0$. Both pieces equal $1-m_i/n$ at
$x_{i,(n-m)}$, so $\widehat G_i$ is continuous and nondecreasing. We take
$m=50$, which gives $m_i=50$ at ten stations and $m_i=49$ at the other two,
and set $Z_{i\ell}:=\{1-\widehat G_i(x_{i\ell})\}^{-1/2}$, so that every
station has approximately standard Pareto margins with tail index~$2$. The
fitted shape parameters range from $-0.22$ to $0.29$; four of the twelve are
negative and one is essentially zero; the tails of several stations are therefore close to the
boundary of the heavy-tailed regime assumed by the theory, in line with the
regional estimates of \citet{asadi2015extremes}. Because the transformation
applies the same power to every margin, all decisions below are unchanged if
tail index~$2$ is replaced by any other value.

\subsection{Design of the sensitivity study}
\label{sec:danube-design}

Both exponents range over the grid
\[
  \mathcal K_{0.01}=\{0.20,0.21,\ldots,0.90\},
\]
giving $71^2=5041$ combinations $(\kappa,\kappa')$. With $n=430$, the number
$k_n=\lfloor n^{\kappa}\rfloor$ ranges from $3$ to $234$ and takes $58$
distinct values, so neighbouring values of $\kappa$ (and of $\kappa'$) often
lead to the same estimator. For each $\kappa$, the EMRE matrix
$\widehat F(\kappa)$ is computed from $Z$, with ties among the $k_n$ largest
observations broken at random, and EASE (Algorithm~\ref{alg:ease}) returns an
order $\widehat\pi_\kappa$. For each $\kappa'$, the reachability matrix
$\widehat R(\kappa,\kappa')$ is obtained from
\eqref{eq:ordered-reachability-estimator} with threshold
$X_{i,(n-k_n')}^{-1}$, computed on the standardized scale.

The order is assessed by its ancestral violation rate
$\operatorname{AVR}(\widehat\pi_\kappa)$, the proportion of the $18$ ancestral
pairs placed in the wrong direction. The reachability matrix is assessed on its
$132$ off-diagonal entries by
\[
  \operatorname{Power}=\frac{\operatorname{TP}}{\operatorname{TP}+\operatorname{FN}}
  \qquad\text{and}\qquad
  \operatorname{NSHD}=\frac{\operatorname{FP}+\operatorname{FN}}{d(d-1)} .
\]
The empty graph, which declares no ancestral relation, has
$\operatorname{NSHD}=18/132=0.136$; an estimator is useful only if it improves
on this benchmark.

\subsection{Results}
\label{sec:danube-results}

\paragraph{Causal order.}
The estimated order depends on $\kappa$ only, and Table~\ref{tab:danube-order}
and Figure~\ref{fig:danube-avr} show that it changes in four steps. For
$0.35\le\kappa\le0.55$, that is, whenever the EMRE uses between $8$ and $28$
events, EASE returns the order
$23\succ26\succ28\succ32\succ19\succ14\succ21\succ11\succ9\succ7\succ13\succ1$,
which has no ancestral violations. With fewer events the order is unstable and
places the outlet, station~$1$, before several of its ancestors. With $29$ or
more events, the order makes a single error that does not change up to
$k_n=234$: station~$1$ is placed before the Danube stations $9$ and~$7$,
giving $\operatorname{AVR}=2/18$.

\begin{table}[t]
  \centering
  \caption{Estimated causal order as a function of $\kappa$, with
    $k_n=\lfloor n^{\kappa}\rfloor$ and $n=430$.}
  \label{tab:danube-order}
  \begin{tabular}{@{}cccl@{}}
    \toprule
    $\kappa$ & $k_n$ & AVR & estimated order $\widehat\pi_\kappa$ \\
    \midrule
    $0.20$--$0.26$ & $3$--$4$    & $0.333$ & $32\succ23\succ11\succ19\succ14\succ21\succ13\succ1\succ26\succ28\succ7\succ9$ \\
    $0.27$--$0.34$ & $5$--$7$    & $0.167$ & $26\succ23\succ28\succ32\succ19\succ14\succ21\succ11\succ13\succ1\succ7\succ9$ \\
    $0.35$--$0.55$ & $8$--$28$   & $0$     & $23\succ26\succ28\succ32\succ19\succ14\succ21\succ11\succ9\succ7\succ13\succ1$ \\
    $0.56$--$0.90$ & $29$--$234$ & $0.111$ & $23\succ26\succ28\succ32\succ19\succ14\succ21\succ11\succ13\succ1\succ9\succ7$ \\
    \bottomrule
  \end{tabular}

  \smallskip
  \footnotesize For $\kappa\in[0.20,0.26]$ and $[0.27,0.34]$ the order shown is
  the one at $\kappa=0.20$ and $\kappa=0.30$; neighbouring values of $\kappa$ in
  these ranges give orders with the same violation rate.
\end{table}

\begin{figure}[t]
  \centering
  \includegraphics[width=0.72\textwidth]{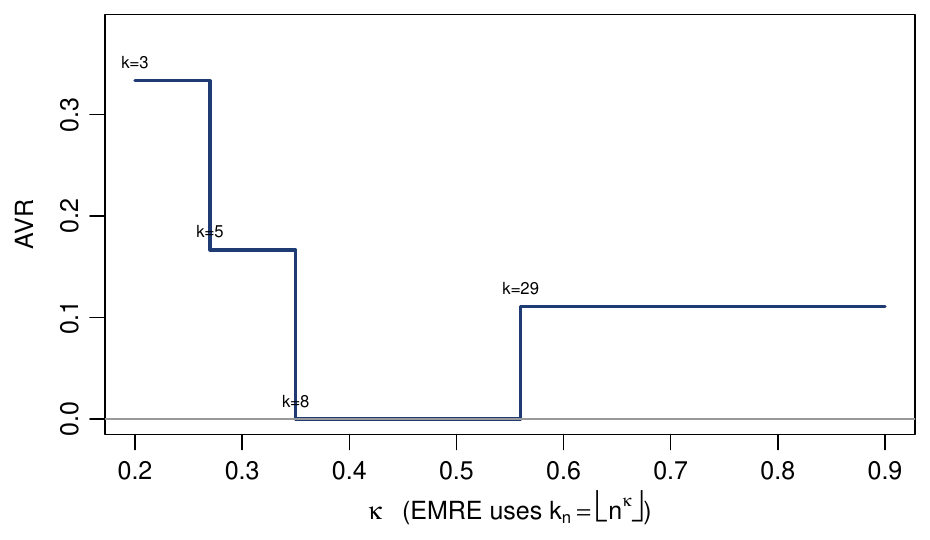}
  \caption{Ancestral violation rate of the EASE order as a function of
    $\kappa$. Labels give $k_n$ at each change.}
  \label{fig:danube-avr}
\end{figure}

\paragraph{Reachability.}
Figure~\ref{fig:danube-heatmaps} shows NSHD and power over the whole grid, and
Table~\ref{tab:danube-selected} lists selected settings. The threshold exponent
$\kappa'$ governs the trade-off between false positives and missed relations.
For $\kappa'\le0.45$ the threshold $X_{i,(n-k_n')}^{-1}$ is small and most
pairs that EASE places in ancestral order are declared ancestral: every such
setting is worse than the empty graph ($\operatorname{NSHD}\ge0.144$), although
the power is at least $0.78$ whenever $\kappa\ge0.35$. For $\kappa'\ge0.80$
and $\kappa\ge0.35$ the threshold is so large that at most two relations are
detected, all of them correctly. Between these
extremes, $1724$ of the $5041$ settings ($34.2\%$) improve on the empty graph,
all with $0.50\le\kappa'\le0.90$; $640$ of them also have a valid order.

The best results form a band in $\kappa'$ rather than a point. For
$\kappa\in[0.35,0.55]$, where the order is valid, NSHD lies between $0.098$
and $0.114$ for every $\kappa'\in[0.57,0.68]$, that is, for $k_n'$ between $31$
and $61$; for $\kappa\ge0.56$, where the order makes one error, it lies between
$0.106$ and $0.121$ over the same band. Over the wider band
$\kappa'\in[0.56,0.75]$, no setting with a valid order is worse than the empty
graph. The
smallest value, $\operatorname{NSHD}=0.098$, is attained for every
$\kappa\in[0.35,0.55]$ at six values of $\kappa'$ between $0.59$ and $0.65$
($k_n'$ between $35$ and $51$), with errors that shift from false positives to
missed relations as $\kappa'$ grows: $13$ true positives, $8$ false positives
and $5$ missed relations (power $0.72$) at $\kappa'=0.59$; $10$, $5$ and $8$
(power $0.56$) for $\kappa'\in[0.60,0.63]$; and $8$, $3$ and $10$ (power
$0.44$) at $\kappa'=0.65$. If a power of
at least $0.8$ is required, the best setting has $\operatorname{NSHD}=0.129$
($15$ true positives, $14$ false positives, $3$ missed relations) at
$\kappa\in[0.35,0.40]$ and $\kappa'\in[0.50,0.52]$. Within the band of good
values, the results hardly depend on $\kappa$: once the order is valid, the
reachability estimate is driven by the threshold.

The best settings have $\kappa'>\kappa$, so that the threshold is based on more
order statistics than the EMRE itself. This agrees with the rate theory behind
Proposition~\ref{prop:reachability-consistency}. For a non-ancestral pair the EMRE is of order
$U_i(n)^{-1}$, while the threshold is of order $U_i(n/k_n')^{-1}$, larger by a
factor of order $(k_n')^{1/\alpha}$. A small $k_n'$ leaves too little room
between the two and produces false positives; a large $k_n'$ pushes the
threshold above the smaller ancestral impulse responses and produces missed
relations.

\begin{figure}[t]
  \centering
  \includegraphics[width=\textwidth]{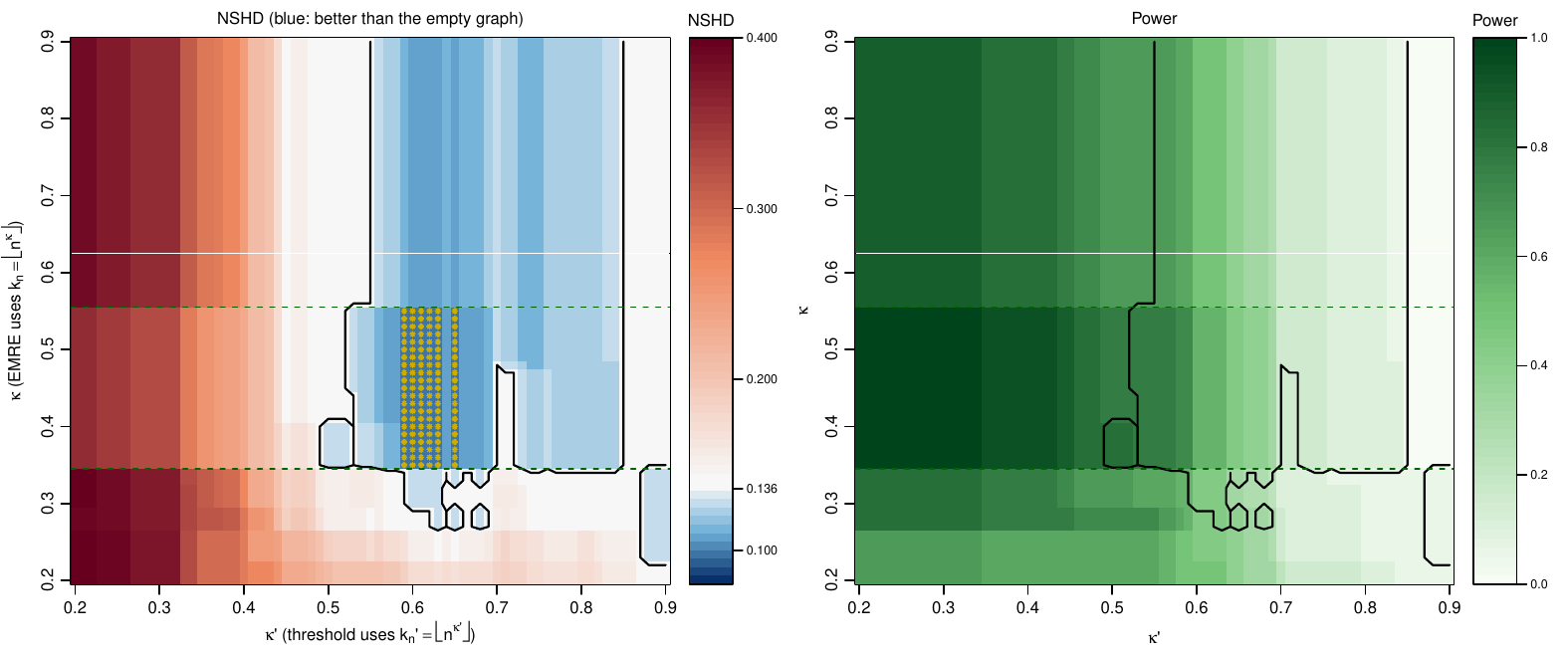}
  \caption{NSHD (left) and power (right) of the reachability estimator over
    $(\kappa,\kappa')\in\mathcal K_{0.01}^2$. Black contours enclose the
    settings that improve on the empty graph ($\operatorname{NSHD}<0.136$);
    dashed lines mark the range $0.35\le\kappa\le0.55$ in which the EASE order
    has no ancestral violations; stars mark the minimum NSHD.}
  \label{fig:danube-heatmaps}
\end{figure}

\begin{table}[t]
  \centering
  \caption{True positives, false positives and missed relations
    (TP/FP/FN) and NSHD for selected $(\kappa,\kappa')$. The empty graph has
    $\operatorname{NSHD}=0.136$.}
  \label{tab:danube-selected}
  \begin{tabular}{@{}cc cc cc cc@{}}
    \toprule
    & & \multicolumn{2}{c}{$\kappa=0.30$ ($k_n=6$)}
      & \multicolumn{2}{c}{$\kappa=0.40$ ($k_n=11$)}
      & \multicolumn{2}{c}{$\kappa=0.70$ ($k_n=69$)} \\
    \cmidrule(lr){3-4}\cmidrule(lr){5-6}\cmidrule(l){7-8}
    $\kappa'$ & $k_n'$ & TP/FP/FN & NSHD & TP/FP/FN & NSHD & TP/FP/FN & NSHD \\
    \midrule
    $0.40$ & $11$  & $14/32/4$  & $0.273$ & $17/30/1$  & $0.235$ & $15/30/3$  & $0.250$ \\
    $0.50$ & $20$  & $12/14/6$  & $0.152$ & $15/14/3$  & $0.129$ & $12/12/6$  & $0.136$ \\
    $0.59$ & $35$  & $10/10/8$  & $0.136$ & $13/8/5$   & $0.098$ & $11/8/7$   & $0.114$ \\
    $0.63$ & $45$  & $8/7/10$   & $0.129$ & $10/5/8$   & $0.098$ & $9/5/9$    & $0.106$ \\
    $0.70$ & $69$  & $2/5/16$   & $0.159$ & $3/3/15$   & $0.136$ & $3/2/15$   & $0.129$ \\
    $0.80$ & $127$ & $1/2/17$   & $0.144$ & $2/0/16$   & $0.121$ & $2/0/16$   & $0.121$ \\
    \midrule
    \multicolumn{2}{@{}l}{AVR} & \multicolumn{2}{c}{$0.167$} & \multicolumn{2}{c}{$0$} & \multicolumn{2}{c}{$0.111$} \\
    \bottomrule
  \end{tabular}
\end{table}

\paragraph{Where the errors occur.}
Figure~\ref{fig:danube-reachability} compares the estimated and true
ancestral relations at $\kappa=0.40$ for $\kappa'=0.50$ and $\kappa'=0.59$.
At $\kappa'=0.50$, $12$ of the $14$ false positives run from the two Isar
stations ($19$ and $14$) or the Lech ($21$) to stations in neighbouring
catchments: the Lech, the Iller ($11$), the Danube stations $9$ and $7$, and
the Inn ($13$); the remaining two run from the Danube stations $9$ and $7$ to
the Inn. These catchments are adjacent, and their extreme
discharges are driven by the same summer storms, a common cause that the
heavy-tailed structural causal model does not include. All three missed
relations end at the outlet ($11$, $23$ and $28\to1$). At $\kappa'=0.59$, eight
false positives remain, all running from the Isar stations or the Lech to
neighbouring catchments, and all five missed relations end at station~$1$
($11$, $19$, $23$, $26$ and $28\to1$). A flood in a single
tributary moves the outlet only slightly on the standardized scale, so these
ancestral impulse responses are small and are the first to fall below the
threshold.

\begin{figure}[t]
  \centering
  \includegraphics[width=\textwidth]{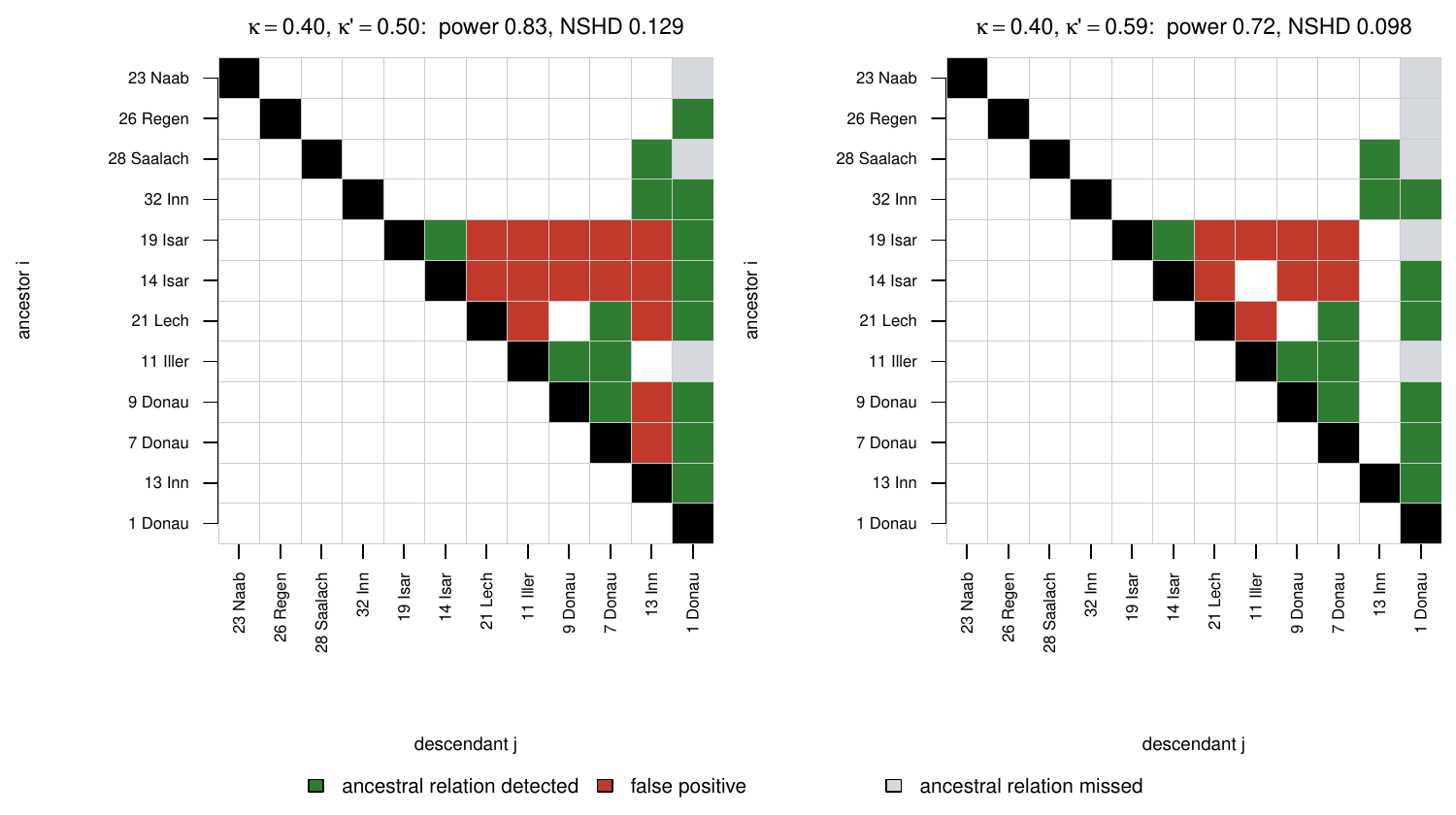}
  \caption{Estimated against true ancestral relations at $\kappa=0.40$, with
    $\kappa'=0.50$ (left) and $\kappa'=0.59$ (right). Rows are potential
    ancestors and columns potential descendants, both in the estimated causal
    order.}
  \label{fig:danube-reachability}
\end{figure}

\subsection{Discussion}
\label{sec:danube-discussion}

Three conclusions emerge. First, the causal order is correct for a range of
$k_n$ rather than a single value, here $8\le k_n\le28$, and degrades
gracefully for larger $k_n$, with one error that remains fixed up to
$k_n=234$. Second, the threshold exponent matters more than the EMRE exponent:
the good settings form a band $0.57\le\kappa'\le0.68$ that is almost
independent of $\kappa$, and choosing $\kappa'$ larger than $\kappa$ agrees
with the rate theory. Third, the remaining errors have a physical
explanation, namely shared storms in adjacent catchments and small impulse
responses into the outlet.

Two caveats apply. First, the minimum NSHD of $0.098$ is selected from $5041$
settings using the true graph, and is therefore optimistic; the band of values
reported above is a more reliable summary. A data-driven choice of
$(\kappa,\kappa')$, for instance one that maximizes the stability of
$\widehat R(\kappa,\kappa')$ across subsamples, is left for future work.
Second, ordering the stations by their mean event discharge also yields a valid
order on these data, because downstream stations carry more water. The EMRE
does not use this information, since it operates on standardized margins, but
the comparison shows that the causal order alone is a modest test on river
networks; the reachability results are the more informative part of the
analysis.

\section{Conclusion}
\label{sec:conclusion}

This paper introduced the extremal minimum ratio estimator of the ancestral
impulse responses of a heavy-tailed homogeneous structural causal model and
showed that it is consistent under regular variation alone. Under additional
conditions on the lower tail of the innovations, we derived its exact rates of
convergence and limiting distributions, for non-ancestral pairs in every
HT-HSCM and for ancestral pairs in $\ell_p$-aggregation models with
$0<p<\infty$; in the max-linear case $p=\infty$, the estimator recovers the
AIR exactly with probability tending to one. These rates lead to a consistent
estimator of the causal order and to a rate-based estimator of the
reachability matrix of the underlying DAG that requires no knowledge of the
tail index. In the application to the upper Danube basin, the procedure
recovered a valid causal order over a range of tuning parameters, and most of
its remaining errors stem from a common cause, shared summer storms, that lies
outside the model.

Several extensions are natural. The homogeneity assumption could be relaxed,
and the robustness of the estimator to departures from the model, in
particular to latent common causes, remains to be studied. The estimator could
also be modified to accommodate innovations that are regularly varying in both
tails, and its performance could be compared with that of the causal tail
coefficient of \citet{gnecco2021causal} and of the scaling-based methods of
\citet{kluppelberg2021estimating} and \citet{krali2026causal}.

\section{Acknowledgements}
I would like to thank Dr.Shuyang Bai for his insightful comments on the paper.

\section*{Funding}
No funding was received for this work.

\bibliographystyle{apalike}
\bibliography{ref}
\end{document}